\DeclareMathOperator*{\argmax}{argmax}
\theoremstyle{remark}
\newtheorem{remark}{Remark}
\renewcommand{\maketag@@@}[1]{\hbox{\m@th\normalsize\normalfont#1}}
\newtheorem{lemma}{Lemma}
\def\ifundefined{\@ifundefined}
\makeatother \setcounter{page}{1}
\begin{document}

\title{MDD-Enabled Two-Tier Terahertz Fronthaul in Indoor Industrial Cell-Free Massive MIMO}

\author{Bohan Li, Diego Dupleich, Guoqing Xia, Huiyu Zhou, Yue Zhang, Pei Xiao, {\em Senior Member, IEEE}, Lie-Liang Yang, {\em Fellow, IEEE} \thanks{B. Li and H. Zhou are with the School of CMS, Uni. of Leicester, LE1 7RH Leicester, UK (E-mail: bl204,hz143@leicester.ac.uk). D. Dupleich is with Fraunhofer Institute for Integrated Circuits IIS, 91058 Erlangen, Germany  (E-mail: diego.dupleich@iis.fraunhofer.de). G. Xia is with the School of Engineering, Uni. of Leicester, LE1 7RH Leicester, UK (E-mail: gx21@leicester.ac.uk). Y. Zhang is  with the Institute of Communication Measurement Technology, 610095 Chengdu, China (e-mail:
yuezhang@icsmcn.cn). P. Xiao is with 5GIC \& 6GIC, Uni. of Surrey, GU2 7XH Guildford, UK (Email: p.xiao@surrey.ac.uk). L. Yang is with the School of ECS, Univ. of Southampton, SO17 1BJ Southampton, UK. (E-mail: lly@ecs.soton.ac.uk)}}

\maketitle

\begin{abstract}
To make indoor industrial cell-free massive multiple-input multiple-output (CF-mMIMO) networks free from wired fronthaul, this paper studies a multicarrier-division duplex (MDD)-enabled two-tier terahertz (THz) fronthaul scheme. More specifically, two layers of fronthaul links rely on the mutually orthogonal subcarreir sets in the same THz band, while access links are implemented over sub-6G band. The proposed scheme leads to a complicated mixed-integer nonconvex optimization problem incorporating access point (AP) clustering, device selection, the assignment of subcarrier sets between two fronthaul links and the resource allocation at both the central processing unit (CPU) and APs. In order to address the formulated problem, we first resort to the low-complexity but efficient heuristic methods thereby relaxing the binary variables. Then, the overall end-to-end rate is obtained by iteratively optimizing the assignment of subcarrier sets and the number of AP clusters. Furthermore, an advanced MDD frame structure consisting of three parallel data streams is tailored for the proposed scheme. Simulation results demonstrate the effectiveness of the proposed dynamic AP clustering approach in dealing with the varying sizes of networks. Moreover, benefiting from the well-designed frame structure, MDD is capable of outperforming TDD in the two-tier fronthaul networks. Additionally, the effect of the THz bandwidth on system performance is analyzed, and it is shown that with sufficient frequency resources, our proposed two-tier fully-wireless fronthaul scheme can achieve a comparable performance to the fiber-optic based systems. Finally, the superiority of the proposed MDD-enabled fronthaul scheme is verified in a practical scenario with realistic ray-tracing simulations.
\end{abstract}

\begin{IEEEkeywords}
cell-free massive MIMO, multicarrier-division duplex, indoor industrial networks, terahertz communications
\end{IEEEkeywords}

\IEEEpeerreviewmaketitle

\section{Introduction}\label{section:MDDTHz:intro}
Compared with the conventional co-located massive multiple-input multiple-output (mMIMO), cell-free mMIMO (CF-mMIMO), reaping the merits of coordinated multipoint (CoMP) transmission and cloud radio access network (C-RAN), is capable of achieving more consistent and robust quality of services (QoS) with densely distributed access points (APs) \cite{demir2021foundations}. In order to facilitate CF-mMIMO, there are two main operation modes, namely centralized mode, where central processing unit (CPU) carries out all the signal processing tasks such as channel estimation, coordinated beamforming and data detection, and distributed mode, where each AP equipped with simple baseband processors operates independently and processes everything except the final data detection. Unsurprisingly, the former catches more attention due to its higher achievable rates, but it heavily relies on the fronthaul capacity for sending (collecting) pre-processed data (raw data) to (from) APs. Additionally, no matter which operation mode is applied, the synchronization between APs and users has to be conducted over fronthaul links. Therefore, fronthaul links are of paramount importance in CF-mMIMO systems.

The classic architecture of centralized CF-mMIMO was studied in \cite{ngo2017cell}, where all densely distributed APs are connected with the CPU via wired fronthaul links and each AP simultaneously serves all users. Although it was shown to outperform the conventional small-cell systems, the wired fronthaul incurs high cost, rendering it impractical especially when the number of APs and users are significantly large. In order to relieve the signaling burden of fronthaul links, the authors in \cite{bjornson2020scalable} proposed a scalable solution based on user-centric dynamic cooperation clustering for both centralized and distributed CF-mMIMO systems. In addition, various user-centric clustering (UCC) approaches for mitigating the fronthaul overhead have been also proposed, e.g, in \cite{buzzi2017cell} and \cite{ammar2021downlink}. However, the UCC only considers addressing the problem of low fronthaul capacity, while the huge infrastructure cost of wired fronthaul like fiber-optic cables are neglected. In order to avoid the excessive use of cables, authors in \cite{zhang2022user} proposed a hierarchical wire-based topology, where the CPU only connects with a leading AP within each AP cluster, and the leading AP then forwards the fronthaul signal to the remaining APs in the cluster. Although this topology can avoid overlength cables and reduce CPU overhead, it exhibits the poor flexibility when the extra APs or users have to be installed or served. Moreover, the conception of radio stripe systems was recently proposed in \cite{interdonato2019ubiquitous,shaik2020cell}. Different from the conventional wire-based star topology, radio stripe systems can accommodate all the APs using one stripe, and therefore significantly lower the cost of cabling. However, the installation of radio stripes are restricted by the physical structure of network scenarios, leading to the fact that the APs in the CF systems cannot be freely deployed. In addition, the radio stripes or any other wired cables are not easily built up/tear down, especially in a complicated indoor scenario. 

In order to provide high flexibility without sacrificing the end-to-end system performance, the high-band techniques relied wireless fronthaul have recently caught the attention of researchers. For instance, due to the small wavelength and large bandwidth of millimeter-wave (mmWave), the mmWave-based fronthaul is capable of providing the sufficient fronthaul capacity for ultra-dense networks \cite{gao2015mmwave,stephen2017joint}. In \cite{demirhan2022enabling}, the authors applied mmWave-enabled point-to-point fronthaul between the CPU and all the APs in CF-mMIMO. To further reduce the expenditure of infrastructure, similar to \cite{zhang2022user}, a hybrid two-tier fronthaul scheme was also proposed in \cite{demirhan2022enabling}, where the CPU transmits fronthaul signal to the leading AP of each cluster over the mmWave band, and the set of APs within cluster are connected with wired cables. However, the wired connections still inevitably limit the flexibility of AP clustering. That is to say, once the AP clusters are determined, it is not easy to readjust them when the number and also the locations of users have changed. 
%In addition, the authors in \cite{ibrahim2021uplink} studied the mmWave-based fronthaul for UL communications in CF-mMIMO considering the practical effect of blockages. 

\subsection{Motivations}
To date, the works on wireless fronthaul in CF-mMIMO are very limited, possibly due to the following two challenges: (i) As APs are densely distributed, the wireless fronthaul links between the CPU and APs should provide sufficiently high capacity and wide coverage; (ii) The conventional TDD-based fronthaul fails to fully exploit the time-frequency resources, and hence leading to the degradation of end-to-end performance. On the other hand, compared with outdoor CF-mMIMO, the indoor industrial CF-mMIMO systems are even more in need of wireless fronthaul, from the perspective of convenient upgrade and low deployment cost. In fact, with the evolution of smart industry, CF-mMIMO has been deemed as the most promising technique to realize the reliable and fast wireless services for industrial communications \cite{elhoushy2021cell}. In \cite{zhang2022user,alonzo2021cell,peng2022resource,wang2020implementation}, the indoor industrial CF-mMIMO with wired fronthaul have been investigated, but to the best of our knowledge, no previous research has studied the indoor industrial CF-mMIMO employed with wireless fronthaul. 

Against this background, in this paper, we propose a multicarrier-division duplex (MDD)-enabled two-tier terahertz (THz) fronthaul scheme, aiming to address the above-mentioned challenges for indoor industrial CF-mMIMO systems. More specifically, inspired by \cite{zhang2022user} and \cite{demirhan2022enabling}, a two-tier fully-wireless fronthaul scheme, consisting of the links between the CPU and leading APs, and between the leading APs and APs, is proposed so as to realize the dynamic AP clustering and achieve high end-to-end rates. To facilitate the proposed scheme, MDD is leveraged to activate the full-duplex (FD) operation among two wireless fronthaul links over orthogonal subcarrier sets \cite{li2021multicarrier}. Compared with its  time-division duplex (TDD) and in-band full-duplex (IBFD) counterparts, MDD is able to largely exploit the time-frequency resources at the little expense of self-interference (SI) and cross-layer interference (CLI) in CF-mMIMO systems \cite{li2022spectral}. The term CLI refers to the interference caused by the node's transmitted signal onto its neighboring nodes during signal reception at the same frequency band and time slot.  On the other hand, with the increasing demands for higher rates and better quality-of-service (QoS) in the industrial networks, THz technique could provide over Gigabit-per-second (Gbps) rates with abundant unoccupied bandwidth \cite{lin2015adaptive}, and achieve the high synchronization accuracy between the CPU and all the APs \cite{demirhan2022enabling}. Hence, THz-based fronthaul is expected to fulfill the goals of replacing the conventional wired fronthaul. In addition, the combination of THz and MDD has the synergy to mitigate the effect of SI and CLI. More specifically, in the systems operated in the FD mode, the interference cancellation depends on the propagation-, analog-, and digital-domain approaches \cite{li2017full}. According to the principle of MDD, digital-domain interference can be thoroughly canceled by the built-in analog-to-digital converter (ADC) of receiver chains with the FFT operation \cite{li2021multicarrier}. Therefore, with the aid of extremely large path loss in the propagation domain and the FFT operation in the digital domain, MDD-based systems operated over the THz band are expected to be completely free from SI and CLI.

\subsection{Contributions}      
The main contributions of this paper are as follows:
\begin{itemize}
\item In order to realize the fully-wireless fronthaul for indoor industrial networks, an MDD-enabled two-tier THz scheme is proposed, which is capable of providing high data rates and wide converge for the distributed APs. Additionally, the proposed scheme can significantly save the cost of wired fronthaul cables and dynamically implement AP clustering to adapt to the variations in the network with different numbers of APs and devices.
\item The overall end-to-end optimization considering two THz fronthaul and sub-6 GHz access links is formulated as a max-min fairness problem. The problem jointly considers the AP clustering, device selection, the assignment of subcarrier sets between two fronthaul links, power and subcarrier allocation at the CPU and CAPs, and power allocation at APs. To deal with the complicated problem, heuristic methods are first leveraged to relax the binary variables with regard to AP clustering, device selction and the assignment of subcarrier sets. Then, the original problem is transformed into two individual fronthaul and access max-min sub-problems. The overall end-to-end rate maximization is solved by iteratively optimizing the assignment of subcarrier sets and the number of AP clusters. 
\item To further facilitate the proposed fronthaul scheme, an advanced MDD frame structure enabling three parallel data streams is proposed. For comparison, the frame structure in TDD-based two-tier THz fronthaul scheme is also specifically designed for the end-to-end rate maximization.
\item To verify the performance of the proposed scheme, we apply it to a practical indoor industrial scenario relying on the ray-tracing (RT) channel measurements. Simulation results show that MDD outperforms TDD in the two-tier fully-wireless fronthaul scheme. Moreover, when provided with sufficient THz bandwidth, indoor industrial CF-mMIMO systems can be free from the wired fronthaul with the aid of our proposed MDD-enabled two-tier THz fronthaul scheme.
\end{itemize}

%The rest of paper is organized as follows. In Section II, the system model of MDD-enabled two-tier THz fronthaul networks is introduced and the corresponding optimization problem is formulated. In Section III, the heuristic methods are proposed to relax the binary variables, and then in Section IV, the overall end-to-end optimization is solved by the iterative methods. In Section V, the advanced MDD frame structure is designed, and for comparison, the TDD-enabled fronthaul systems is also optimized with the well-deigned frame structure. Finally, simulations and discussions are presented in Section VI. Conclusions are given in Section VII.
\setlength{\abovedisplayskip}{3pt}
\setlength{\belowdisplayskip}{3pt}

%1. 主AP动态选择还是固定？； 2. 是否引入ADC量化误差
\section{System Model}\label{section:MDDTHz:SM}
We study the downlink (DL) transmission in an indoor industrial cell-free massive MIMO (CF-mMIMO) network, where the CPU is equipped with a $\left(N_{\text{CPU}}^{'} \times N_{\text{CPU}}^{''}\right)$ uniform planar array (UPA) having $N_{\text{CPU}}=N_{\text{CPU}}^{'} N_{\text{CPU}}^{''}$ antennas, and sends data to APs via wireless fronthaul channels over THz frequency-band, while $Q$ APs, thereafter, transmit the received data to $U$ single-antenna devices via access channels over sub-6 GHz band. To enable such a dual-band architecture, every AP employs both the high- and low-frequency RF chains \cite{gao2015mmwave,demirhan2022enabling}, and each RF chain is connected with a $\left(N_{\text{AP}}^{'} \times N_{\text{AP}}^{''}\right)$ UPA of $N_{\text{AP}}=N_{\text{AP}}^{'}N_{\text{AP}}^{''}$ antennas at transmitter and single antenna at receiver\footnote{In this paper, we leverage two separate antenna arrays at AP to realize simultaneous signal transmission over the THz and sub-6 GHz bands, respectively. Furthermore, considering the system's overhead and space usage, an integrated low-band and high-band MIMO antenna module proposed in \cite{hussain2022integrated} can be applied to make our dual-band architecture more efficient.}. 

In general, it is straightforward to connect every AP with CPU through wireless fronthaul links \cite{masoumi2019performance}. However, as APs are densely distributed in CF-mMIMO systems, it would be a demanding task to compute all the beamforming matrices at CPU, while the management of inter-beam interference is also highly challenging. To circumvent this problem, we propose a two-tier fully-wireless fronthaul architecture, as shown in Fig. \ref{figure-MDDTHz-Archi}, where $Q$ APs are classified into $L$ distinct clusters, expressed as $\mathcal{C}_l$, $\forall l\in \left\{1,...,L\right\}$, and $\left|\bigcup_{l=1}^L \mathcal{C}_l\right|=Q$. The CPU only establishes fronthaul link with a leading AP, namely computational AP (CAP), in each cluster. The CAP of each cluster takes charge of both the fronthaul transmissions with the remaining APs and the signal processing tasks, such as DL precoding and channel estimation\footnote{In order to achieve the dynamic AP clustering and CAP selection as discussed in Section \ref{section:MDDTHz:APC}, each AP is equipped with a basic baseband processor such that any AP within a cluster can be chosen as the CAP. This kind of configuration, similar to that in the distributed CF-mMIMO \cite{demir2021foundations}, inevitably increases the overhead of system, but it can significantly reduce the cost on fiber-optic cables, and its budget may be further reduced by the developed hardware techniques in the future.}. This architecture enables CF-mMIMO to get rid of the wired fronthaul connections, thereby exhibiting high flexibility and scalability. Furthermore, it conforms to the principle of the user-centric CF-mMIMO, where CAPs act as geographically distributed edge-cloud processors to divide the tasks of CPU\footnote{Although the CAPs employed with low-frequency RF chains can implement DL transmissions over sub-6 GHz band like other APs, it is assumed to only transmit and receive fronthaul signal for the sake of saving computational overhead.}. Moreover, we assume that all the APs in one cluster serve the same device, and the group of APs serving device $u$, expressed as $\mathcal{G}_u$, is determined via selecting the set of clusters, such that $\mathcal{G}_u \subseteq \left\{1,..,L\right\}$. Let $\mathcal{U}_l$ denote the set of devices concurrently served by cluster $l$. We further assume that each cluster can serve at most $U_{\text{max}}$ devices, i.e., $\left|\mathcal{U}_l\right|\leq U_{\text{max}}, \forall l \in\left\{1,...,L\right\}$, while each device can choose up to $C_{\text{max}}$ serving clusters, meaning that $\left|\mathcal{G}_u\right|\leq C_{\text{max}}, \forall u \in\left\{1,...,U\right\}$.

%Considering the computational complexity and system scalability, each user $d$ is served by only a subset of APs with the aid of user-centric clustering, i.e., $\mathcal{L}_d$, such that $\left|\mathcal{L}_d\right|\leq L$. Note that for a given AP $l$, it may concurrently transmit data to more than one user, i.e., $\mathcal{L}_{d1}\cap \mathcal{L}_{d2}=\left\{l\right\}, \exists d1 \neq d2$. 

\begin{figure}%\vspace{-0.5cm}
\centering
\includegraphics[width=0.5\linewidth]{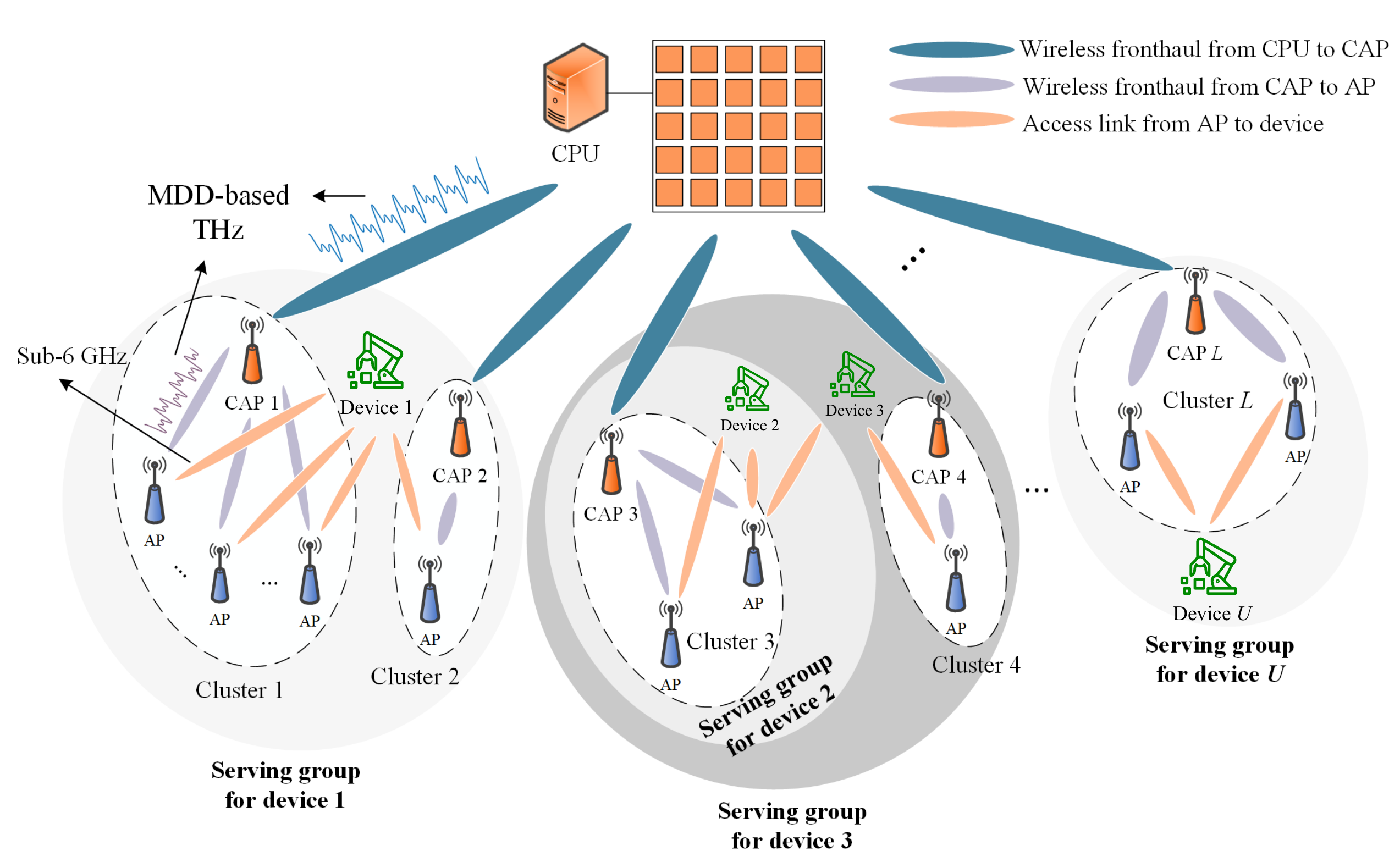}
\vspace{-0.5cm}
\caption{The proposed MDD-Enabled THz Fronthaul in indoor industrial CF-mMIMO.}
\label{figure-MDDTHz-Archi}\vspace{-0.7cm}
\end{figure}

To overcome multi-path fading, especially in wideband THz communications, we resort to OFDM for signal transmission. In particular, we assume that CPU and CAPs transmit MDD-based fronthaul signals relying on two mutually orthogonal sets of subcarriers in the same THz band, i.e., $\mathcal{M}_{\text{CC}}$ and $\mathcal{M}_{\text{CA}}$, such that $\mathcal{M}_{\text{CC}}\bigcap \mathcal{M}_{\text{CA}}=\emptyset$ and $\mathcal{M}_{\text{CC}} \bigcup \mathcal{M}_{\text{CA}}= \mathcal{M}_{\text{FH}}$, while the data transmitted from APs to devices are modulated on the subcarrier set $\mathcal{M}_{\text{AD}}$ over the sub-6 GHz band. In CF-mMIMO, it is intuitive that each cluster of APs may simultaneously serve more than one device, which requires CAPs to receive (transmit) the multi-device signals from (to) CPU (APs). To this end, in the proposed multicarrer system, the CPU and CAPs adopt precoders to map the different devices' signals within each cluster onto different subcarriers of $\mathcal{M}_{\text{CC}}$ and $\mathcal{M}_{\text{CA}}$, respectively.

%The general frame structure of the MDD-enabled THz fronthaul system is shown in Fig. \ref{figure-MDDTHz-gene}. Specifically, 

%\begin{figure}
%\centering
%\includegraphics[width=0.4\linewidth]{}
%\caption{General frame structure for the MDD-enabled THz fronthaul system.}
%\label{figure-MDDTHz-gene}
%\end{figure}

\subsection{Fronthaul Transmissions}
The fronthaul in our proposed architecture consists of CPU-to-CAP and CAP-to-AP links. As these links rely on THz communications, we first establish the delay-$t$ THz channel model between the $\left(N^{'}\times N^{''}\right)$ UPA transmitter and the single-antenna receiver as \cite{yuan2020hybrid}
\begin{align}\label{eq:MDD-THz:THzChannelt}
&\pmb{h}_t(f_m, d) = \pmb{h}_t^{\text{LoS}}(f_m, d)+\pmb{h}_t^{\text{NLoS}}(f_m, d)= \sqrt{N^{'}N^{''}} \alpha^{\text{LoS}}(f_m, d)p_{\text{rc}}(tT_{\text{s}}-\delta) G_{\text{tx}} G_{\text{rx}} \pmb{a}_{\text{tx}}\left(\phi^{\text{tx}}, \theta^{\text{tx}}\right) \nonumber \\
&+\sqrt{N^{'}N^{''}/N_{\text{ray}}}  \sum_{i=1}^{N_{\text{ray}}} \alpha_i(f_m, d) G_{\text{tx}} G_{\text{rx}} p_{\text{rc}}(tT_{\text{s}}-\delta_i) \pmb{a}_{\text{tx}}\left(\phi_i^{\text{tx}}, \theta_i^{\text{tx}}\right),
\end{align}
where $f_m$ is the center frequency of the $m$-th subcarrier, $d$ is the path distance, $G_{\text{tx}}$ and $G_{\text{rx}}$ denote the antenna gains of transmitter and receiver, respectively, and $p_{\text{rc}}(\cdot)$ denotes the combined effect of pulse shaping and analog filtering, with $T_{\text{s}}$ and $\delta$ denoting the sampling interval and delay. Moreover, $\alpha^{\text{LoS}}(f_m, d)$, dependent on frequency and path distance, is the complex gain of the LoS path component, and its squared norm is given by
%\begin{equation}
$\left|\alpha^{\text{LoS}}(f_m, d)\right|^2=L_{\text{spread}}(f_m,d)L_{\text{abs}}(f_m,d),$
%\end{equation}
where $L_{\text{spread}}(f_m,d)=\left(\frac{c}{4 \pi f_m d}\right)^2$ and $L_{\text{abs}}(f_m,d)=e^{-k_{\text{abs}}(f_m)d}$ with $k_{\text{abs}}(f_m)$ denoting the frequency-related absorption coefficient, which is determined by the propagation medium \cite{lin2015adaptive}. By contrast, the complex gain of the NLoS path is subject to not only the distance and frequency, but also the reflection coefficient, which is given by $\left|\alpha_i(f_m, d)\right|^2=\Gamma_i^2(f_m)\left|\alpha^{\text{LoS}}(f_m, d)\right|^2$, where $\Gamma_i^2(f_m)$ is the product of the Fresnel reflection coefficient and the Rayleigh roughness factor \cite{lin2015adaptive}. Furthermore, $\pmb{a}_{\text{tx}}(\phi^{\text{tx}},\theta^{\text{tx}})$ denotes the UPA response vector at the transmitter \cite{lin2015adaptive},
%which is given by
%\begin{align}
%\pmb{a}_{\text{tx}}(\phi^{\text{tx}},\theta^{\text{tx}})=&\frac{1}{\sqrt{N^{'}N^{''}}}\Big[1,...,e^{j\frac{2\pi a}{\lambda_{\text{c}}}\left(n^{'}\cos\phi^{\text{tx}}\sin\theta^{\text{tx}}+n^{''}\sin\phi^{\text{tx}}\sin\theta^{\text{tx}}\right)}, \nonumber \\
%&...,e^{j\frac{2\pi a}{\lambda_{\text{c}}}\left((N^{'}-1)\cos\phi^{\text{tx}}\sin\theta^{\text{tx}}+(N^{''}-1)\sin\phi^{\text{tx}}\sin\theta^{\text{tx}}\right)}\Big]
%\end{align}
where $\phi$, $\theta$ denote azimuth angle of departure (AoD) and elevation AoD, receptively.
% $a$ is the antenna element spacing, and $\lambda_{\text{c}}$ is wavelength.

%According to \cite{priebe2011aoa}, the azimuth/elevation angles of AoD for indoor THz channels can be characterized by second-order Gaussian mixture models.

Given the delay-$t$ THz channel model in \eqref{eq:MDD-THz:THzChannelt}, the $m$-th subcarrier channel can be expressed as\footnote{Note that as the main objectives of this paper are to design a two-tier fully-wireless fronthaul architecture and jointly implement resource allocation among fronthaul and access links, the problems of beam squint and carrier frequency offset arising from THz channels have not been considered \cite{yuan2020hybrid,dovelos2021channel}, which are left to our future study.}
\begin{equation}\label{eq:MDD-THz:THzChannelf}
\pmb{h}[m] = \sum_{t=0}^{T_{\text{delay}}-1}\pmb{h}_t(f_m, d)e^{-j\frac{2\pi mt}{\left|\mathcal{M}_{\text{FH}}\right|}}. 
\end{equation}

\subsubsection{CPU-to-CAP Link}
The CPU beamforms the signal toward CAPs over the subcarriers selected from $\mathcal{M}_{\text{CC}}$. Let $x_{l,u}[m]$ denote the data intended for device $u$ associated with the $l$-th cluster over the $m$-th subcarrier, with $\mathbb{E}\left[\left|x_{l,u}[m]\right|^2\right]=1$. Then, the $m$-th subcarrier signal received at the $l$-th CAP can be expressed as
\begin{small}
\begin{align}\label{eq:MDD-THz:Rylm}
&y_{l}[m] = \underbrace{\pmb{h}_l^H[m]\sum_{u \in \mathcal{U}_l} \gamma_{l,u,m}\sqrt{p_{l,u}[m]}\pmb{f}_{l,u}[m]x_{l,u}[m]}_{\text{Desired signal}}+\underbrace{\pmb{h}_l^H[m]\sum_{ l^{\prime}\neq l }\sum_{u^{\prime} \in \mathcal{U}_{l^{\prime}}} \gamma_{l^{\prime},u^{\prime},m}\sqrt{p_{l^{\prime},u^{\prime}}[m]}\pmb{f}_{l^{\prime},u^{\prime}}[m]x_{l^{\prime},u^{\prime}}[m]}_{\text{Inter-CAP interference}} \nonumber \\
&+ \text{SI}_l +  n_l[m], \ \ \forall m \in \mathcal{M}_{\text{CC}},
\end{align}
\end{small}
where 
%the binary variable $\lambda_{l,u}=1$ denotes the $l$-th cluster is associated with the $u$-th user, and $\lambda_{l,u}=0$, otherwise. The 
$\gamma_{l,u,m}=1$ denotes that the CPU sends the data of device $u$ to the $l$-th CAP over the $m$-th subcarrier of $\mathcal{M}_{\text{CC}}$, and $\gamma_{l,u,m}=0$, otherwise, $p_{l,u}[m]$ is the power coefficient with respect to the device $u$ within the $l$-th cluster over subcarrier $m$, $\pmb{h}_l[m] \in \mathbb{C}^{N_{\text{CPU}} \times 1}$ is the channel impulse response of fronthaul channel from CPU to CAP $l$, which follows the expression of \eqref{eq:MDD-THz:THzChannelf}, $\pmb{f}_{l,u}[m] \in \mathbb{C}^{{N}_{\text{CPU}} \times 1}$ is the precoder specific for device $u$ associated with the $l$-th cluster over subcarrier $m$, 
%with $\left\|\pmb{F}^{\text{RF}}\pmb{f}^{\text{BB}}_{l,u}[m]\right\|_2=1$
$n_l[m] \sim \mathcal{CN}(0,\sigma_{\text{l}}^2)$ is the additive white Gaussian noise. Note that, as CAP leverages the subcarrier diversity to simultaneously receive the data of different devices, one subcarrier cannot be assigned to multiple devices within one cluster, i.e., $\gamma_{l,u_1,m}\gamma_{l,u_2,m}=0$ when $u_1\neq u_2$. Furthermore, $\text{SI}_l\sim \mathcal{CN}(0,\sigma_{\text{SI}}^2)$ denotes the residual SI caused by the CAP's own transmitter.

\begin{remark}
Compared with the conventional in-band full-duplex, MDD enables simultaneous DL and UL transmissions over orthogonal subcarrier sets in the same band. Owing to this feature, the SI is orthogonal to the UL signal in the digital domain at the CAP's receiver, and hence can be readily canceled with the FFT operation. However, as SI is strongly coupled with the UL signal in the analog domain, the analog-domain SIC is still indispensable. Otherwise, the interference of overlarge power may overwhelm the desired signal of relatively small power, leading to the severe quantization noise after its passing through ADC. It is noteworthy that the suppression of the analog-domain SI in high-frequency bands is much easier than that of the digital-domain SI, due to the highly directional beam and large path-loss \cite{yadav2018full}. To this end, the SI mitigation in MDD-based THz systems can be achieved at less expenses. Hence, without deviation of the core of this paper, the detailed discussion of SI cancellation is omitted. Instead, we model the residual interference as Gaussian noise consisting of the combined effect of the additive noise introduced by automatic gain control, non-linearity of ADC and the phase noise arose from oscillator due to RF imperfection \cite{day2012full2}. 
\end{remark}

Based on \eqref{eq:MDD-THz:Rylm}, the achievable rate of CAP $l$ corresponding to device $u$ can be expressed as
%\begin{small}
\begin{equation}\label{eq:MDD-THz:CCSINR}
\small
C_{l,u}^{\text{CC}} = b^{\text{FH}}\sum_{m \in \mathcal{M}_{\text{CC}}}\log\left(1+\frac{\gamma_{l,u,m}p_{l,u}[m]\left|\pmb{h}_l^H[m]\pmb{f}_{l,u}[m]\right|^2}{\sum_{ l^{\prime}\neq l }\sum_{u^{\prime} \in \mathcal{U}_{l^{\prime}}} \gamma_{l^{\prime},u^{\prime},m}p_{l^{\prime},u^{\prime}}[m]\left|\pmb{h}_l^H[m]\pmb{f}_{l^{\prime},u^{\prime}}[m]\right|^2+\sigma_{\text{SI}}^2+\sigma_{\text{l}}^2}\right),
\end{equation}
%\end{small}
where $b^{\text{FH}}$ denotes the subcarrier bandwidth in THz band. Since device $u$ may have $\left|\mathcal{G}_u\right|$ serving clusters, its message will be received by multiple CAPs. In this case, the effective rate of device $u$ in the CPU-to-CAP links can be defined as the minimum rate of the CAPs in the serving cluster $\mathcal{G}_u$, which can be written as
%\begin{equation}\label{eq:MDD-THz:Ccc}
$C_{u}^{\text{CC}} = \min_{l \in \mathcal{G}_u} \ \left\{C_{l,u}^{\text{CC}}\right\}$.
%\end{equation}

%To reduce the THz hardware complexity, we assume that the angles of the analog phase shifters are quantized by $V$ bits, and has $2^V$ possible values, which constitute the codebook $\mathcal{F}$. Hence, the $\mathcal{F}$ can be given as

\subsubsection{CAP-to-AP Link}
Each CAP $l$ forwards the CPU signal, relying on $\mathcal{M}_{\text{CA}}$ subcarriers, to the remaining APs within its serving cluster $\mathcal{C}_l$. Let $\tilde{x}_{l,q,u}[\bar{m}]$ denote the data intended for device $u$ associated with the $l$-th cluster and $q$-th AP over the $\bar{m}$-th subcarrier. Then, the signal to be transmitted at the CAP $l$ can be expressed as
%\begin{equation}
$\pmb{s}_l = \sum_{\bar{m} \in \mathcal{M}_{\text{CA}}} \pmb{s}_l[\bar{m}]$,
%\end{equation}
where $\pmb{s}_l[\bar{m}]=\sum_{q \in \mathcal{C}_l}\sum_{u \in \mathcal{U}_l}\gamma_{l,q,u,\bar{m}}\sqrt{p_{l,q,u}[\bar{m}]}\pmb{w}_{l,q,u}[\bar{m}]\tilde{x}_{l,q,u}[\bar{m}]$, $\gamma_{l,q,u,\bar{m}}=1$ denotes that the $l$-th CAP sends the data of device $u$ to the $q$-th AP over the $\bar{m}$-th subcarrier in $\mathcal{M}_{\text{CA}}$, and $\gamma_{l,q,u,\bar{m}}=0$, otherwise, $p_{l,q,u}[\bar{m}]$ is the power coefficient associated with the device $u$ at AP $q$ over subcarrier $\bar{m}$, $\pmb{w}_{l,q,u}[\bar{m}] \in \mathbb{C}^{{N}_{\text{AP}} \times 1}$ is the precoder specific for device $u$ associated with AP $q$ over subcarrier $\bar{m}$, 
%with $\left\|\pmb{W}^{\text{RF}}_l\pmb{w}^{\text{BB}}_{l,q,u}[\bar{m}]\right\|_2=1$,
 $n_l[\bar{m}] \sim \mathcal{CN}(0,\sigma_{\text{n}}^2)$ is the additive white Gaussian noise. Analogously, as AP only has one receiver antenna, the receiving of different devices' data depends on subcarrier diversity, leading to $\gamma_{l,q,u_1,\bar{m}}\gamma_{l,q,u_2,\bar{m}}=0$, when $u_1\neq u_2$.

\vspace{0.1cm}
Then, the signal received at the $q$-th AP of $\mathcal{C}_l$ over subcarrier $\bar{m}$ can be written as
\begin{small}
\begin{align}\label{eq:MDD-THz:Rylq}
&y_{l,q}[\bar{m}]= \underbrace{\pmb{h}_{l,q}^H\!\sum_{u \in \mathcal{U}_l}\!\gamma_{l,q,u,\bar{m}}\sqrt{p_{l,q,u}[\bar{m}]}\pmb{w}_{l,q,u}[\bar{m}]\tilde{x}_{l,q,u}[\bar{m}]}_{\text{Desired signal}} \! +\! \underbrace{\pmb{h}_{l,q}^H\!\sum_{q^{\prime}\neq q}\!\sum_{u^{\prime} \in \mathcal{U}_l}\!\gamma_{l,q^{\prime},u^{\prime},\bar{m}}\sqrt{p_{l,q^{\prime},u^{\prime}}[\bar{m}]}\pmb{w}_{l,q^{\prime},u^{\prime}}[\bar{m}]\tilde{x}_{l,q^{\prime},u^{\prime}}[\bar{m}]}_{\text{Intra-cluster interference}} \nonumber \\
&+ \underbrace{\sum_{l^{\prime}\neq l}  \pmb{h}_{l^{\prime},q}^H\pmb{s}_{l^{\prime}}[\bar{m}]}_{\text{Inter-cluster interference}} + n_{l,q}[\bar{m}], \ \ \forall \bar{m} \in \mathcal{M}_{\text{CA}},
\end{align}
\end{small}
where $\pmb{h}_{l,q}[\bar{m}] \in \mathbb{C}^{N_{\text{AP}} \times 1}$ is the fronthaul channel between CAP $l$ and AP $q$ following the expression of \eqref{eq:MDD-THz:THzChannelf}.

\begin{remark}
Similar to the SI at CAP receiver, as two fronhaul links operate over orthogonal subcarriers, the cross-link interference (CLI) imposed by CPU at each AP receiver is orthogonal to the CAP-to-AP fronthaul signal in the digital domain. Moreover, due to the employment of beamforming at CPU and extremely high large-scale fading between CPU and APs, the analog-domain CLI significantly deteriorates before arriving at ADC of APs' receiver. Accordingly, the CLI can be neglected in this case. 
\end{remark}

Consequently, based on \eqref{eq:MDD-THz:Rylq}, the achievable rate of AP $q$ in the cluster $\mathcal{C}_l$ corresponding to device $u$ can be expressed as
\begin{small}
\begin{align}\label{eq:MDD-THz:CASINR}
&C_{l,q,u}^{\text{CA}} =b^{\text{FH}} \cdot \nonumber \\
&{\scriptsize \sum_{m \in \mathcal{M}_{\text{CC}}}\!\log\!\left(\!1\!+\!\frac{\gamma_{l,q,u,\bar{m}}p_{l,q,u}[\bar{m}]\!\left|\pmb{h}_{l,q}^H[\bar{m}]\pmb{w}_{l,q,u}[\bar{m}]\right|^2}{\sum_{ q^{\prime}\neq q }\!\sum_{u^{\prime} \in \mathcal{U}_{l}}\! \gamma_{l,q^{\prime},u^{\prime},\bar{m}}p_{l,q^{\prime},u^{\prime}}[\bar{m}]\!\left|\pmb{h}_{l,q}^H[\bar{m}]\pmb{w}_{l,q^{\prime},u^{\prime}}[\bar{m}]\right|^2\!+\! \sum_{l^{\prime}\neq l}\!\left|\mathbb{E}\left\{\pmb{h}_{l^{\prime},q}^H  \pmb{s}_{l^{\prime}}[\bar{m}]\right\}\right|^2\!+\!\sigma_{q}^2}\right).}
\end{align} 
\end{small}
Therefore,
%, similar to \eqref{eq:MDD-THz:Ccc}, 
the effective rate in cluster $\mathcal{C}_l$ corresponding to device $u$ can be written as
%\begin{equation}
$C_{u}^{\text{CA}} = \min_{l \in \mathcal{G}_u, q \in \mathcal{C}_l} \ \left\{C_{l,q,u}^{\text{CA}}\right\}$.
%\end{equation}
%\vspace{-0.2cm}

\subsection{Device Communications via AP-to-Device Link}
Once the APs decode the fronthaul signal received from the CAP, it will communicate with devices via the sub-6 GHz access channels, which are assumed to experience flat fading with $|\mathcal{M}_{\text{AD}}|=1$, for ease of discussion. In particular, the channel between the $q$-th AP in $\mathcal{C}_l$ and device $u$, i.e., $\tilde{\pmb{h}}_{l,q,u} \in  \mathbb{C}^{N_{\text{AP}} \times 1}$, is modeled as $(\tilde{\pmb{h}}_{l,q,u})_i=\sqrt{\beta_{l,q,u}} \ \alpha_{\text{s}}$ with $\beta_{l,q,u}$ and $\alpha_{\text{s}}$ denoting the large and small-scale fading coefficients, respectively. 
%\footnote{{\color{red} explain why we assume flat fading: because we won't consider subcarrier allocation under the sufficient spatial-domain degree of freedom}}
%a $\tilde{T}$ tap multipath channel between AP $q$ and user $u$, i.e., $\tilde{\pmb{H}}_{q,u} \in  \mathbb{C}^{N_{\text{AP}} \times\tilde{T}}$, where each element is modeled as $(\tilde{\pmb{H}}_{q,u})_{i,j}=\sqrt{\beta_{q,u}/\tilde{T}} \ \alpha_{\text{s}}$ with $\beta_{q,u}$ and $\alpha_{\text{s}}$ denoting the large and small-scale fading coefficients, respectively. Then, provided that $M_\text{DL}$ subcarriers are used for DL communications, the $\tilde{m}$-th access subchannel can be obtained as $\tilde{\pmb{h}}_{q,u}[\tilde{m}]=\tilde{\pmb{H}}_{q,u}\pmb{\Psi}\pmb{F}_{\text{FFT}}\pmb{\phi}$, where $\pmb{\varPsi}\in \mathbb{C}^{\tilde{T}\times{M_{\text{DL}} }}$ is constructed by the first $\tilde{T}$ rows of $\pmb{I}_{M_\text{DL}}$, $\pmb{F}_{\text{FFT}} \in \mathbb{C}^{M_{\text{DL}}\times M_{\text{DL}}}$ is the FFT matrix, and $\pmb{\phi}=\pmb{I}_{M_\text{DL}}^{(\tilde{m},:)}$ is the mapping vector.
 
The transmitted signal from the $q$-th AP within cluster $\mathcal{C}_l$ can be written as
%\begin{equation}
$\pmb{s}_{l,q} = \sum_{u \in \mathcal{U}_l}\sqrt{p_{l,q,u}}\pmb{v}_{l,q,u} x_{u}$,
%\end{equation}
where $p_{l,q,u}$ denotes the transmit power coefficient for device $u$ at the $q$-th AP, $\pmb{v}_{l,q,u} \in \mathbb{C}^{{N}_{\text{AP}} \times 1}$ is the precoder specific for device $u$ at the $q$-th AP. %$\left\|\pmb{V}^{\text{RF}}_{l,q}\pmb{v}^{\text{BB}}_{l,q,u}\right\|_2=1$.
Then, the received signal at device $u$ can be expressed as
%\begin{small}
\begin{equation}
\small
y_u = \underbrace{\sum_{l \in \mathcal{G}_u}\pmb{h}_{l,u}^H\pmb{P}_{l,u}\pmb{v}_{l,u}x_{u}}_{\text{Desired signal}}+\underbrace{\sum_{l \in \mathcal{G}_u}\sum_{u^{\prime} \in \mathcal{U}_l,u^{\prime}\neq u}\pmb{h}_{l,u}^H\pmb{P}_{l,u^{\prime}} \pmb{v}_{l,u^{\prime}}x_{u^{\prime}}}_{\text{Inter-device interference}} 
+ \underbrace{\sum_{l^{\prime} \notin \mathcal{G}_u} \sum_{q \in \mathcal{C}_{l^{\prime}}} \tilde{\pmb{h}}_{l^{\prime}q,u}^H \pmb{s}_{l^{\prime},q}}_{\text{Inter-cluster interference}} + n_u,
\end{equation}
%\end{small}
where $\small \pmb{h}_{l,u}\!=\!\left[\tilde{\pmb{h}}_{l,1,u}^T,...,\tilde{\pmb{h}}_{l,\left|\mathcal{C}_l\right|,u}^T\right]^T$\!, \!$\small \tilde{\pmb{P}}_{l,u}\!=\!\left(\pmb{P}_{l,u}\!\otimes \!\pmb{I}_{N_\text{AP}}\right)$ with $\small \pmb{P}_{l,u}\!=\!\text{diag}\left(\left[\sqrt{p_{l,1,u}},...,\sqrt{p_{l,{\left|\mathcal{C}_l\right|},u}}\right]\right)$, and $\small \pmb{v}_{l,u}=\left[\left(\pmb{v}_{l,1,u}\right)^T,...,\left(\pmb{v}_{l,\left|\mathcal{C}_l\right|,u}\right)^T\right]^T$.
The achievable rate of the AP-to-Device link corresponding to device $u$ can be written as
%\begin{small}
\begin{equation}
\small
{\scriptsize C_u^{\text{AD}}\!=\!B^{\text{AD}}\log\!\left(\!1+\frac{\left|\sum_{l \in \mathcal{G}_u}\pmb{h}_{l,u}^H\tilde{\pmb{P}}_{l,u} \pmb{v}_{l,u}\right|^2}{\sum_{l \in \mathcal{G}_u}\sum_{u^{\prime} \in \mathcal{U}_l,u^{\prime}\neq u}\left|\pmb{h}_{l,u}^H\tilde{\pmb{P}}_{l,u^{\prime}}\pmb{v}_{l,u^{\prime}}\right|^2+\sum_{l^{\prime} \notin \mathcal{G}_u} \sum_{q \in \mathcal{C}_{l^{\prime}}}\left|\mathbb{E}\left\{\tilde{\pmb{h}}_{l^{\prime}q,u}^H \pmb{s}_{l^{\prime},q}\right\}\right|^2+\sigma^2_u}\!\right),}
\end{equation}
%\end{small}
where $B^{\text{AD}}$ denotes the bandwidth of the sub-6 GHz band.

\subsection{Optimization Problem}
Considering the fact that the data flows from CPU to each device, as shown in Fig. \ref{figure-MDDTHz-Archi}, the achievable rate of device $u$ in $bits/s$ can be written as
%\begin{equation}
$C_u = \text{min}\left\{C_u^{\text{CC}},C_u^{\text{CA}},C_u^{\text{AD}}\right\}$.
%\end{equation}
Then, to guarantee the fairness among all the devices, the optimization problem can be stated as:
\vspace{-0.3cm}
\begin{subequations}
\label{eq:MDD-THz:opt}
\begin{align}
&\underset{\mathcal{C},\mathcal{G},\mathcal{M},\left\{\pmb{f}\right\},\left\{\pmb{w}\right\},\left\{\pmb{v}\right\},\left\{\gamma\right\}\left\{p\right\}}{\text{maximize}} \ \min_{u\in\left\{1,...,U\right\}}C_u \\
\text{s.t.} \ \ 
%& L\leq Q, \\
& \sum_{l=1}^L \left|\mathcal{C}_l\right|=Q, \ \mathcal{C}_{l_1} \cap \mathcal{C}_{l_2} = \emptyset, \text{if} \  l_1 \neq l_2, \\
& \left|\mathcal{U}_l\right|\leq U_{\text{max}}, \forall l \in\left\{1,...,L\right\}, \\
& \left|\mathcal{G}_u\right| \leq C_{\text{max}}, \forall u \in \left\{1,..,U\right\},\\
& \mathcal{M}_{\text{CC}}\cup \mathcal{M}_{\text{CA}} = \mathcal{M}_{\text{FH}}, \mathcal{M}_{\text{CC}} \cap \mathcal{M}_{\text{CA}} =\emptyset, \\
&\gamma_{l,u,m}\in \left\{0,1\right\}, \ \gamma_{l,u_1,m}\gamma_{l,u_2,m}=0, \forall u_1\neq u_2, \\
&\gamma_{l,q,u,\bar{m}}\in \left\{0,1\right\}, \ \gamma_{l,q,u_1,\bar{m}}\gamma_{l,q,u_2,\bar{m}}=0, \forall u_1\neq u_2, \\
& \sum_{m \in \mathcal{M}_{\text{CC}}}\sum_{l=1}^L \sum_{u=1}^U \gamma_{l,u,m}p_{l,u}[m]\left\|\pmb{f}_{l,u}[m]\right\|_2^2 \leq P_{\text{CPU}}, \\
& \sum_{\bar{m} \in \mathcal{M}_{\text{CA}}}\sum_{q \in \mathcal{C}_{l}}\sum_{u \in \mathcal{U}_{l}} \gamma_{l,q,u,\bar{m}}p_{l,q,u}[\bar{m}]\left\|\pmb{w}_{q,u}[\bar{m}]\right\|_2^2 \leq P_{\text{AP}}, \forall l \in \left\{1,...,L\right\}, \\
& \sum_{u \in \mathcal{U}_l}p_{l,q,u}\left\|\pmb{v}_{l,q,u}\right\|_2^2 \leq P_{\text{AP}}, \forall l \in \left\{1,...,L\right\}, q \in \left\{1,...,Q\right\}, 
%& 0<t_1,t_2<1, \  2t_1+t_2 = 1.
\end{align} 
\end{subequations}
It can be observed that the overall optimization problem is non-convex and involved with different types of variables. Hence, it is extremely time-consuming and high-complexity to find an optimal solution. To work around this issue, in the next section, we aim to relieve the constraints (\ref{eq:MDD-THz:opt}b)-(\ref{eq:MDD-THz:opt}e) so as to find the approximate solution with high computational efficiency.

%to reduce the computational complexity, we leverage the classic algorithms to implement AP clustering and user selection, thereby relieving the constraints (\ref{eq:MDD-THz:opt}c)-(\ref{eq:MDD-THz:opt}e), but keep the number of AP clusters (i.e., $L$), corresponding to the constraint (\ref{eq:MDD-THz:opt}b), as a variable. 
% as each link is related to AP clustering, user selection, beamforming design and resource allocation. With this regard, we will break up the problem into several parts, and then achieve an suboptimal solution.

% Alogorithm: AP clustering -> CAP selection -> subcarrier allocation -> beamforming and power allocation
% Simulation: 1) no CAP; 2) fixed CAP; 3) vs TDD (frame structure) and IBFD (latency)
\section{Relaxation of Optimization Problem} \label{section:MDDTHz:relax}
According to formulation \eqref{eq:MDD-THz:opt}, the AP clustering, device selection and the assignment of subcarrier sets consist of a large number of binary variables. It is impractical to simultaneously optimize all the binary variables, especially for the networks of large size. Therefore, we will leverage the heuristic but efficient algorithms to deal with these problems, provided that the number of clusters $L$ and the sizes of $\mathcal{M}_{\text{CC}}$ and $\mathcal{M}_{\text{CA}}$ are already determined.
 
\subsection{AP Clustering}\label{section:MDDTHz:APC}
In the light of the fact that the indoor THz-based communications are distance-sensitive due to the extremely high path loss, it is reasonable to implement AP clustering mainly based on the proximity principle. In other words, a group of APs situated closely to each other are most likely to be arranged in a cluster. In doing so, the high-rate fronthaul transmissions between CAP and APs within the cluster can be achieved, and the centralized beamforming design can produce more focused beams toward devices. Apart from AP clustering, the selection of CAP in a cluster is critical to the proposed cluster-based wireless fronthaul scheme. Specifically, each CAP acts as the pivot of two fronthaul links, simultaneously receiving data from CPU and transmitting data to the APs within the cluster. Hence, the potential CAP candidates should have the capability to build the reliable wireless fronthaul connections with both CPU and APs. Additionally, as THz communications are leveraged for fronthaul, the selection of CAP should consider mainly the large-scale fading between CPU and APs. In this paper, three AP clustering approaches, namely Dynamic clustering (DC), Static clustering (SC) and integrated DC and SC (IDSC), are taken into account and presented as follows: 
\begin{enumerate}[(i)]
\item DC: The DC algorithm consists of two steps. In the first step, the AP clustering can be implemented by {\em K-medoids} method \cite{park2009simple}, and $\left\{\mathcal{C}_l\right\}_{l=1}^L$ is then obtained.
%the {\em K-medoids} method dominates the AP clustering \cite{park2009simple}. Specifically, $L$ APs are randomly selected as the central points of $L$ clusters. Then, each of the remaining APs is assigned to the cluster, whose central point is closest to the AP. The central point of each cluster is then updated. The above process is repeated until there is no change in all the clusters. Consequently, the result of AP clustering is obtained as $\left\{\mathcal{C}_l\right\}_{l=1}^L$. 
In the second step, the CAP of each cluster needs to be determined. Following \eqref{eq:MDD-THz:THzChannelt}, let $\alpha^{\text{CC-LoS}}_q$ denote the LoS complex gain between CPU and the $q$-th AP, while $\alpha^{\text{CA-LoS}}_{q,q^{'}}$ denote the LoS complex gain between the $q$-th AP and the $q^{'}$-th AP in the same cluster. Then, the total LoS channel gain corresponding to AP $q$ is calculated as $\Psi_q=|\alpha^{\text{CC-LoS}}_q|+\sum_{q^{'} \in \mathcal{C}_l, q^{'}\neq q}|\alpha^{\text{CA-LoS}}_{q,q^{'}}|$, and
%where $\chi_1, \chi_2 \in \left[0,1\right]$ are the weighted coefficients. 
the CAP in cluster $l$ can be selected as $\text{CAP}_l=\argmax_{q}\Psi_q$. 
\item SC: As we can see from DC that the complexity of {\em K-medoids} increases with the number of APs. Moreover, the selected CAPs of different AP clusters may be close to each other, leading to high inter-cluster interference. Hence, from the perspective of practical implementation, in SC, we can simply divide the whole communication area into $L$ sub-areas, and the APs located within a sub-area are classified as one AP cluster. Instead of dynamically selecting CAP, each sub-area manually deploys one CAP at its central point.
\item IDSC: Followed by DC and SC methods, it is intuitive to propose an integrated DC and SC method. Specifically, in IDSC, the AP clustering is achieved in the same way as the SC method, while the CAP selection follows the approach in the DC method.
\end{enumerate}
%2) Selecting different APs as CAP may also have disparate impact on neighboring clusters in terms of inter-cluster interference (ICI). However, the ICI can be handled by implementing resource allocation among adjacent clusters. With this regard, the ICI can be temporarily ignored during the selection of CAP. 
%3) Although the relatively optimal CAP can be found in each cluster, some of CAPs' channels may be strongly correlated, leading to large inter-CAP interference during CPU-to-CAP transmission.

%Note that $\chi$ should be empirically set according to the practical requirements of two fronthaul links. For instance, if CAP is equipped with a large number of antennas and situates not far from all the APs, the relative location of CAP candidates among other APs is more prominent. 
\begin{remark}
As $L$ increases, some AP clusters may only have a single AP, i.e., CAP. In this case, it is assumed that the CAP receives signal via CPU-to-CAP fronthaul, and then directly communicates with devices. Furthermore, when $L$ is increased to $Q$, i.e., $L=Q$, the proposed MDD-based wireless fronthaul degrades to the conventional single-tier fronthaul network, which can be deemed as a benchmark for comparison in Section \ref{sec:MDDTHz:sim}.
\end{remark}

\subsection{Device Selection}
The device selection here is also known as the user-centric clustering \cite{buzzi2017cell,bjornson2020scalable}. After AP clustering, since all the APs within a cluster are in vicinity to each other and also receive the fronthaul signals from the same CAP, we assume that all the APs in one cluster serve the same set of devices. In other words, in our proposed user-centric CF-mMIMO, devices actively select AP clusters rather than individual APs. 
%However, considering the system scalability and complexity, we assume that each cluster serves up to $U_{\text{max}}$ users, and each user establishes DL transmissions with up to $C_{\text{max}}$ clusters at the same time. 
To implement the device selection, let $v_{l,u}=\sum_{q \in \mathcal{C}_l}\left\|\tilde{\pmb{h}}_{l,q,u}\right\|_2^2$ denote the channel gain between the $l$-th cluster and $u$-th device.
%\footnote{ Note that the average channel gain instead of the total channel gain allows the cluster that is proximal to the user but consists of small number of APs to provide more efficient DL transmissions.} 
The selection process consists of several steps: (i) To begin with, we set $\mathcal{G}_{u}=\mathcal{U}_l=\emptyset$ and $\mathcal{L}=\left\{1,...,L\right\}$, $\forall u, l$, and sequentially assign an initial cluster to each device $u$, following the rule of $l^{\ast}=\argmax_{l\in \mathcal{L}}{v_{l,u}}$. Then, we update $\mathcal{G}_{u}=\mathcal{G}_{u} \bigcup \left\{l^{\ast}\right\}$ and $\mathcal{U}_l = \mathcal{U}_l \bigcup \left\{u\right\}$. During this step, if $\left|\mathcal{U}_l\right|>U_{\text{max}}$, the $l$-th cluster can no longer be chosen, and $\mathcal{L}=\mathcal{L} \backslash  l$; (ii) Considering the fairness among devices, we re-rank the devices in the order of total channel gain, i.e., $v_{u}=\sum_{l\in \mathcal{G}_u}\sum_{q \in \mathcal{C}_l}\left\|\tilde{\pmb{h}}_{l,q,u}\right\|_2^2$. The device having the smallest total channel gain is allowed to first choose the optimal $l^{\ast}$ from the remaining clusters. The corresponding sets are then updated; (iii) Step (ii) is repeated until $\left|\mathcal{U}_l\right|=U_{\text{max}}$ or $\left|\mathcal{G}_{u}\right| = C_{\text{max}}$ for $\forall l \in \left\{1,...,L\right\}$ and $\forall u \in \left\{1,...,U\right\}$.

%\vspace{-0.8cm}
\subsection{$\mathcal{M}_{\text{CC}} \ \& \ \mathcal{M}_{\text{CA}}$ Assignment}\label{section:MDDTHz:MCA}
%Since two fronthaul links use the same frequency band but orthogonal sets of subcarriers according to the principle of MDD, the sets of $\mathcal{M}_{\text{CC}}$ and $\mathcal{M}_{\text{CA}}$ should be determined. 
%\vspace{-0.3cm}
Intuitively, the CAP-to-AP link has the priority over the CPU-to-CAP link to choose subcarriers. The reason can be explained as follows: (i) The CPU having more antennas and larger transmit power is capable of mitigating the interference between CAPs and overcoming the high path loss; (ii) On the contrary, CAP has to serve multiple APs in the cluster with relatively less antennas and transmit power, while the quality of CAP-to-AP link is also subject to the inter-cluster interference. Therefore, during the initialization process, it is reasonable to allocate the subcarriers with better channel quality to CAP-to-AP link, thereby guaranteeing the reliable fronthaul rate.    

In this regard, we adopt the unfair greedy algorithm to first allocate the subcarriers to CAP-to-AP link, and the remaining subcarriers are then assigned to CPU-to-CAP link.  Note that the maximum ratio transmission (MRT) precoding strategy is applied to reduce the computational complexity during the subcarrier selection. Next, we mathematically describe the process of the proposed approach. Let $\mathcal{M}_{\text{CC}}=\mathcal{M}_{\text{CA}}=\emptyset$.
 %and the maximum size of $\mathcal{M}_{\text{CC}}$ and $\mathcal{M}_{\text{CA}}$ (i.e.,  ${M}_{\text{CC}}$ and ${M}_{\text{CA}}$) are predefined, where ${M}_{\text{CC}}<{M}_{\text{CA}}$. 
Given that CAP $l$ can obtain $\pmb{h}_{l,q}[\tilde{m}], \ \forall q \in \mathcal{C}_l,\tilde{m} \in \mathcal{M}_{\text{FH}}$, the channel gain between CAP $l$ and AP $q$ over the $\tilde{m}$-th subcarrier is derived as $w_{l,q}[\tilde{m}]=\left\|\pmb{h}_{l,q}[\tilde{m}]\right\|_2^2$. Then, we find the subcarrier $\bar{m}$ having the maximum channel gain in $\mathcal{C}_l$, i.e., $\bar{m}=\argmax_{\tilde{m}}\sum_{q \in \mathcal{C}_l}w_{l,q}[\tilde{m}]$, and update $\mathcal{M}_{\text{CA}}$ with $\mathcal{M}_{\text{CA}}=\mathcal{M}_{\text{CA}}\bigcup \left\{\bar{m}\right\}$. All the CAPs sequentially select their optimal subcarriers and incorporate them into $\mathcal{M}_{\text{CA}}$, until the size of $\mathcal{M}_{\text{CA}}$ is satisfied. 

\section{Proposed Solution for the End-to-End Rate Optimization}
With the methods proposed in Section \ref{section:MDDTHz:relax}, the results of AP clustering, device selection and subcarrier sets assignment can be obtained for the particular $L$ and sizes of $\mathcal{M}_{\text{CC}}$ and $\mathcal{M}_{\text{CA}}$. Then, the original problem in \eqref{eq:MDD-THz:opt} is divided into two independent sub-problems, that is, the optimization of two fronthaul links relying on the power allocation and subcarrier distribution at the CPU and CAPs for conveying different devices' data, and the optimization of access links related to the power allocation at APs, which can be expressed as follows: 
%\begin{subequations}
\begin{align}\label{eq:MDD-THz:SE_sub1}
\left(\mathcal{P}_1\right) \ \ \ &\underset{\left\{\pmb{v}\right\},\left\{p_{l,q,u}\right\}}{\text{maximize}} \ \min_{u\in\left\{1,...,U\right\}}\left\{C_u^{\text{AD}}\right\}, \ \text{s.t.} \ ~~(\ref{eq:MDD-THz:opt}\text{j}),
\end{align}
%\end{subequations}
and
\begin{subequations}
\label{eq:MDD-THz:SE_sub2}
\begin{align}
\left(\mathcal{P}_2\right) \ \ \ &\underset{\left\{\pmb{f}\right\},\left\{\pmb{w}\right\},\left\{\gamma\right\},\left\{p_{l,u}[m]\right\},\left\{p_{l,q,u}[\bar{m}]\right\}}{\text{maximize}} \ \min_{u\in\left\{1,...,U\right\}}\left\{C_u^{\text{CC}},C_u^{\text{CA}}\right\} \\
\text{s.t.} \ &~~(\ref{eq:MDD-THz:opt}\text{f}), (\ref{eq:MDD-THz:opt}\text{g}), (\ref{eq:MDD-THz:opt}\text{h}),(\ref{eq:MDD-THz:opt}\text{i}).
\end{align}
\end{subequations}
%The solutions to these problems are detailed as follows.

\subsection{The Optimization of DL Transmissions}

During the optimization of resource allocation, we assume that the regularized zero-forcing (RZF) precoding is applied. Then, we have {\scriptsize{$\pmb{V}^{\text{RZF}}_{l}=\left[\pmb{v}^{\text{RZF}}_{l,1},...,\pmb{v}^{\text{RZF}}_{l,\left|\mathcal{U}_l\right|}\right]=\pmb{H}_l\left(\pmb{H}_l^H\pmb{H}_l+\epsilon \pmb{I}_{\left|\mathcal{U}_l\right|}\right)^{-1}$}}, where $\small \pmb{v}^{\text{RZF}}_{l,u} \in \mathbb{C}^{\left|\mathcal{C}_l\right|N^{\text{AP}} \times 1}=\left[\left(\pmb{v}_{l,1,u}^{\text{RZF}}\right)^T,...,\left(\pmb{v}_{l,\left|\mathcal{C}_l\right|,u}^{\text{RZF}}\right)^T\right]^T$, $\small \pmb{H}_l \in \mathbb{C}^{\left|\mathcal{C}_l\right|N^{\text{AP}} \times \left|\mathcal{U}_l\right|}=\left[\pmb{h}_{l,1},...,\pmb{h}_{l,\left|\mathcal{U}_l\right|}\right]$, $\epsilon >0$ is the regularization factor. Note that $\pmb{v}_{l,1,u}^{\text{RZF}}$ is the normalized precoder following $\small \pmb{v}_{l,q,u}^{\text{RZF}}\leftarrow{\pmb{v}_{l,q,u}^{\text{RZF}}}/{\left\|\pmb{v}^{\text{RZF}}_{l,u}\right\|_2}$. Correspondingly, the optimization $\left(\mathcal{P}_1\right)$ can be reformed as
%\begin{small}
\begin{subequations}
\small
\label{eq:MDD-THz:SE_resub1}
\begin{align}
&\underset{\left\{p_{l,q,u}\right\}}{\text{maximize}} \min_{u\in\left\{1,...,U\right\}} \nonumber \\
&{\scriptsize \left\{\text{SINR}_u= \frac{\left|\sum_{l \in \mathcal{G}_u}\pmb{h}_{l,u}^H\tilde{\pmb{P}}_{l,u}\pmb{v}_{l,u}^{\text{RZF}}\right|^2}{\sum_{l \in \mathcal{G}_u}\sum_{u^{\prime} \in \mathcal{U}_l,u^{\prime}\neq u}\left|\pmb{h}_{l,u}^H\tilde{\pmb{P}}_{l,u^{\prime}}\pmb{v}^{\text{RZF}}_{l,u^{\prime}}\right|^2+\sum_{l^{\prime} \notin \mathcal{G}_u} \sum_{u^{\prime} \in \mathcal{U}_{l^{\prime}}}\left|\pmb{h}_{l^{\prime},u^{\prime}}^H\tilde{\pmb{P}}_{l^{\prime},u^{\prime}}\pmb{v}^{\text{RZF}}_{l^{\prime},u^{\prime}}\right|^2+\sigma^2_u}\right\}} \\
&\text{s.t.} \ \sum_{u \in \mathcal{U}_l}p_{l,q,u}\left\|\pmb{v}^{\text{RZF}}_{l,q,u}\right\|_2^2 \leq P_{\text{AP}}, \forall l \in \left\{1,...,L\right\}, q \in \left\{1,...,Q\right\}.
\end{align}
\end{subequations}
%\end{small}
Upon introducing a slack variable $\chi$ associated with SINR, an alternative optimization problem can be formulated as
\begin{subequations}
\label{eq:MDD-THz:SE_resub11}
\small
\begin{align}
&\underset{\left\{p_{l,q,u}\right\}}{\text{minimize}} \ \sum_{l=1}^L \sum_{u \in \mathcal{U}_l}\left\|\pmb{P}_{l,u}\right\|_{\text{F}}, \\
&\text{s.t.} \ {\scriptsize \left\|\left[\underbrace{...,\pmb{h}_{l,u}^H\tilde{\pmb{P}}_{l,u^{\prime}}\pmb{v}^{\text{RZF}}_{l,u^{\prime}},...,}_{\sum_{l \in \mathcal{G}_u}(\left|\mathcal{U}_l\right|-1) \ \text{elements}}\underbrace{...,\pmb{h}_{l^{\prime},u^{\prime}}^H\tilde{\pmb{P}}_{l^{\prime},u^{\prime}}\pmb{v}^{\text{RZF}}_{l^{\prime},u^{\prime}},...,}_{\sum_{l^{\prime} \notin \mathcal{G}_u}\left|\mathcal{U}_{l^{\prime}}\right| \ \text{elements}}\sqrt{\sigma^2_u}\right]\right\| \leq \left|\sum_{l \in \mathcal{G}_u}\pmb{h}_{l,u}^H\tilde{\pmb{P}}_{l,u}\pmb{v}_{l,u}^{\text{RZF}}\right|\sqrt{\frac{1}{\chi}}},\ \forall u \in \left\{1,...,U\right\},  \nonumber \\ 
%&p_{l,q,u}\geq 0, \forall l \in \left\{1,...,L\right\}, q \in \left\{1,...,Q\right\}, u \in \left\{1,...,U\right\}, \\
&\text{and} \ (\ref{eq:MDD-THz:SE_resub1}\text{b}),
\end{align}
\end{subequations}
where the introduction of (\ref{eq:MDD-THz:SE_resub11}\text{a}) aims to accelerate the convergence rate. Accordingly, the problem \eqref{eq:MDD-THz:SE_resub1} can be equivalently solved by finding the maximum value within the set $\left\{\chi\right\}$, which makes the second-order cone optimization in \eqref{eq:MDD-THz:SE_resub11} feasible. In order to search the optimal $\chi$, the upper bound of $\chi$ is required to be determined first, which is described as\vspace{-0.3cm}
\begin{equation}
\small
\chi_{\text{upper}}=\underset{{u \in \left\{1,...,U\right\}}}{\min}\left\{\underset{\left\{p_{l,q,u}\right\}}{\text{maximize}} {\left|\sum_{l \in \mathcal{G}_u}\pmb{h}_{l,u}^H\tilde{\pmb{P}}_{l,u}\pmb{v}_{l,u}^{\text{RZF}}\right|^2 }/{\sigma^2_u}, \ \text{s.t.} \ (\ref{eq:MDD-THz:SE_resub1}\text{b}) \right\}.
\end{equation}
Then, the bisection method (e.g., \cite[Algorithm 7.5]{demir2021foundations}) can be applied to derive the solution. 

\begin{comment}
The overall algorithm for dealing with $\left(\mathcal{P}_1\right)$ is summarized in Algorithm \ref{MDDTHz:al1}.
\vspace{-0.5cm}
\begin{algorithm}
\caption{The Optimization of DL Transmissions} 
\label{MDDTHz:al1}
\scriptsize
\KwIn{$\left\{\mathcal{C}_l\right\}_{l=1}^L, \left\{\mathcal{U}_l\right\}_{l=1}^L, \left\{\mathcal{G}_u\right\}_{u=1}^U, \left\{\pmb{V}^{\text{RZF}}_{l}\right\}_{l=1}^L, \left\{\pmb{H}_{l}\right\}_{l=1}^L$;}
\textbf{Initialization:} Set $\chi_{\text{lower}}=0$, $\chi_{\text{upper}}$, $\epsilon$, and $\left\{p_{l,q,u}=0\right\}, \forall l \in \left\{1,...,L\right\}, q \in \left\{1,...,Q\right\}, u \in \left\{1,...,U\right\}$\;
\LT{}{
\Repeat{$\chi_{\textup{upper}}-\chi_{\textup{lower}}<\epsilon$}{
Set $\chi=\frac{\chi_{\text{upper}}+\chi_{\text{lower}}}{2}$ and solve the optimization problem (\ref{eq:MDD-THz:SE_resub11})\;
\eIf{\em{(\ref{eq:MDD-THz:SE_resub11}) \text{is feasible}}}{Update $\chi_{\textup{lower}}=\chi$ and $\left\{p_{l,q,u}^{\ast}\right\}=\left\{p_{l,q,u}\right\}$;}{$\chi_{\textup{upper}}=\chi$;} 
}
}
\KwOut{$C^{\text{AD}}=\underset{u \in \left\{1,...,U\right\}}{\min}\left\{B^{\text{AD}}\log\left(1+\text{SINR}_u\right)|\left\{p_{l,q,u}^{\ast}\right\}\right\}$}
\end{algorithm}
%\vspace{-1cm}
\end{comment}

%\vspace{-0.6cm}
\subsection{The Optimization of Fronthaul Transmissions}\label{section:MDDTHz:front}
The objective of $\left(\mathcal{P}_2\right)$ is to maximize the fronthaul rates under the fairness constraint. As seen in \eqref{eq:MDD-THz:SE_sub2}, given that the assignment of  the subcarrier sets for two MDD-based fronthaul links is fixed, the optimization of $C^{\text{CC}}$ and $C^{\text{CA}}$ can be implemented in a parallel way. The RZF precoding is assumed to be adopted during resource allocation over two fronthaul links. More specifically, we have $\small \pmb{F}^{\text{RZF}}[m] \in \mathbb{C}^{N^{\text{CPU}} \times L}=\left[\pmb{f}^{\text{RZF}}_{1,u}[m],...,\pmb{f}^{\text{RZF}}_{L,u}[m]\right]=\pmb{H}[m]\left(\pmb{H}^H[m]\pmb{H}[m]+\epsilon \pmb{I}_L\right)^{-1}$ and $\small \pmb{W}^{\text{RZF}}_l[\bar{m}] \in \mathbb{C}^{N^{\text{AP}} \times \left|\mathcal{C}_l\right|}=\left[\pmb{w}^{\text{RZF}}_{l,1,u}[\bar{m}],...,\pmb{w}^{\text{RZF}}_{l,\left|\mathcal{C}_l\right|,u}[\bar{m}]\right]=$\\$\small \pmb{H}_l[\bar{m}]\left(\pmb{H}_l^H[\bar{m}]\pmb{H}_l[\bar{m}]+\epsilon \pmb{I}_{\left|\mathcal{C}_l\right|}\right)^{-1}$, where $\small \pmb{H}[m] \in \mathbb{C}^{N^{\text{CPU}} \times L}=\left[\pmb{h}_{1}[m],...,\pmb{h}_{L}[m]\right]$, $\small \pmb{H}_l[\bar{m}] \in \mathbb{C}^{N^{\text{AP}} \times \left|\mathcal{C}_l\right|}=\left[\pmb{h}_{l,1}[\bar{m}],...,\pmb{h}_{l,\left|\mathcal{C}_l\right|}[\bar{m}]\right]$. Note that each column of $\pmb{F}^{\text{RZF}}[m]$ and $\pmb{W}^{\text{RZF}}_l[\bar{m}]$ 
%are the same for any device $u \in \left\{1,...,U\right\}$,
has been normalized, i.e., $\small \pmb{f}^{\text{RZF}}_{l,u}[m]\leftarrow \pmb{f}^{\text{RZF}}_{l,u}[m]/\left\|\pmb{f}^{\text{RZF}}_{l,u}[m]\right\|_2$ and $\small \pmb{w}^{\text{RZF}}_{l,q,u}[\bar{m}]\leftarrow \pmb{w}^{\text{RZF}}_{l,q,u}[\bar{m}]/\left\|\pmb{w}^{\text{RZF}}_{l,q,u}[\bar{m}]\right\|_2$. Then, the optimization of CPU-to-CAP link can be formulated as
\begin{subequations}
\label{eq:MDD-THz:SE_sub3}
\begin{align}
&\left(\mathcal{P}_{3}\right) \ \underset{\left\{p_{l,u}[m]\right\},\left\{\gamma_{l,u,m}\right\}}{\text{maximize}} \min_{u\in\left\{1,...,U\right\}, l\in \mathcal{G}_u}\left\{C_{l,u}^{\text{CC}}\right\} \\
&\text{s.t.} \ \sum_{m \in \mathcal{M}_{\text{CC}}}\sum_{l=1}^L \sum_{u=1}^U \gamma_{l,u,m}p_{l,u}[m] \leq P_{\text{CPU}}, \ \text{and}
~~(\ref{eq:MDD-THz:opt}\text{f}).
\end{align}
\end{subequations}
Obviously, $\left(\mathcal{P}_{3}\right)$ is a non-convex optimization problem due to the nature of objective function and binary variables $\left\{\gamma_{l,u,m}\right\}$ along with constraint (\ref{eq:MDD-THz:opt}\text{f}). To this end, in what follows, we present a tractable form of (\ref{eq:MDD-THz:SE_sub3}) by exploiting the relationship between continuous and binary variables, based on which the combination of quadratic transform (QT) and bisection method can be leveraged to solve it in an iterative way.

\subsubsection{Binary Reduction of $\gamma$}
The binary variable $\gamma$ and the continuous variable $p$ are strongly coupled, and their relationship can be demonstrated by the following Lemma:
\begin{lemma}\label{lemma:MDDTHz:mup}
There exist only two potential combinations of $\gamma_{l,u,m}$ and $p_{l,u}[m]$ in the optimal solution to $\left(\mathcal{P}_{3}\right)$, which are $(\gamma_{l,u,m}^\ast,p_{l,u}^\ast[m])\in \left\{(0,p_{l,u}[m]= 0),(1,p_{l,u}[m]\neq 0)\right\}$, where $p_{l,u}[m]\neq 0$.
\end{lemma}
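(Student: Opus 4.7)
The plan is to prove the lemma by case analysis, exploiting the fact that, throughout $\left(\mathcal{P}_{3}\right)$, the variables $\gamma_{l,u,m}$ and $p_{l,u}[m]$ enter both the objective \eqref{eq:MDD-THz:CCSINR} and the power budget constraint (\ref{eq:MDD-THz:SE_sub3}\text{b}) only through the product $\gamma_{l,u,m}\,p_{l,u}[m]$. This observation reduces the lemma to ruling out the two unwanted combinations $(\gamma_{l,u,m},p_{l,u}[m])=(0,\neq 0)$ and $(\gamma_{l,u,m},p_{l,u}[m])=(1,0)$.

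First I would handle the combination $(\gamma_{l,u,m}^{\ast},p_{l,u}^{\ast}[m])=(0,p^{\ast}\neq 0)$. Since every occurrence of $p_{l,u}[m]$ in $C_{l,u}^{\text{CC}}$ and in the CPU power constraint is multiplied by $\gamma_{l,u,m}$, replacing $p^{\ast}$ by $0$ produces an identical objective value and identical left-hand sides for all constraints; the substituted point is therefore feasible and attains the same optimum. Hence the optimal solution can always be taken, without loss of generality, so that $p_{l,u}^{\ast}[m]=0$ whenever $\gamma_{l,u,m}^{\ast}=0$, eliminating this combination.

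Next I would treat the combination $(\gamma_{l,u,m}^{\ast},p_{l,u}^{\ast}[m])=(1,0)$. Here the key point is the exclusivity constraint (\ref{eq:MDD-THz:opt}\text{f}), which enforces $\gamma_{l,u_1,m}\gamma_{l,u_2,m}=0$ for $u_1\neq u_2$: assigning $\gamma_{l,u,m}^{\ast}=1$ blocks subcarrier $m$ within cluster $l$ from serving any other device, yet zero power means no contribution to the numerator of $C_{l,u}^{\text{CC}}$. I would argue by contradiction: starting from such a putative optimum, flip $\gamma_{l,u,m}$ to $0$ (leaving $p=0$) so that the subcarrier becomes free inside cluster $l$, then reassign it to some device $u^{\prime}\in\mathcal{U}_l$ with a strictly positive power drawn from the slack in the CPU budget (which exists because the discarded term contributed $0$ to the budget). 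Since $\pmb{f}^{\text{RZF}}_{l,u^{\prime}}[m]$ produces a nonzero desired-signal coefficient $|\pmb{h}_l^H[m]\pmb{f}^{\text{RZF}}_{l,u^{\prime}}[m]|^2>0$ almost surely, this reassignment strictly increases $C_{l,u^{\prime}}^{\text{CC}}$ while leaving every other $C_{l^{\prime},u^{\prime\prime}}^{\text{CC}}$ either unchanged or affected only through an inter-CAP interference term that can be absorbed by infinitesimally small $p$; hence the minimum over users cannot decrease, contradicting optimality (or, at worst, reducing to the canonical form $(0,0)$ already covered above).

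The main obstacle I anticipate is the last step: rigorously showing that the flip-and-reallocate operation is guaranteed to be feasible and non-detrimental for every other user in the max-min objective, because perturbing $p_{l,u^{\prime}}[m]$ upward may marginally raise the inter-CAP interference seen at other CAPs. I would address this by choosing the reallocated power small enough that the incremental interference is dominated by the noise floor $\sigma_{\text{SI}}^2+\sigma_{l}^2$, making the net change in every $C_{l^{\prime},u^{\prime\prime}}^{\text{CC}}$ arbitrarily small, while the beneficiary's rate rises by a strictly positive amount bounded below by the chosen power times the beamforming gain. Combining the two cases then yields exactly the two admissible pairings stated in the lemma.
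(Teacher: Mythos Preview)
Your proposal is correct and follows the same case-analysis skeleton as the paper: list the four possible pairs $(\gamma,p)$ and rule out $(0,p\neq 0)$ and $(1,0)$. Your treatment of the first case is in fact cleaner than the paper's---you correctly note that, because $\gamma$ and $p$ appear only as a product in both the rate expression and the CPU power budget, setting $p\to 0$ when $\gamma=0$ is exactly neutral, whereas the paper argues a strict inequality via ``reduces the available transmit power,'' which does not hold given the product form of the budget. For the $(1,0)$ case, the paper takes the simpler route of keeping $\gamma_{l,u,m}=1$ and raising $p_{l,u}[m]$ in place (invoking RZF so that the induced inter-CAP interference is negligible), rather than your flip-and-reassign-to-$u'$ construction; your argument is valid but more elaborate than necessary, and your own fallback clause (``at worst, reducing to $(0,0)$'') already suffices for the without-loss-of-generality reading the lemma actually needs.
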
 

\begin{proof}
Refer to \cite[Lemma 1]{li2022spectral}
%Appendix \ref{app:MDD-THz:L1}.
\end{proof} 
Based on Lemma 1, only $p_{l,u}[m]$ is needed during the optimization process. The feasible $\gamma_{l,u,m}^\ast$ can be subsequently obtained from the optimal $p_{l,u}^\ast[m]$, which is given as 
\begin{equation}\label{eq:MDDTHz:ldmpower}
\gamma_{l,u,m}^\ast=
\begin{cases}
0, &\frac{p_{l,u}^\ast[m]}{P_{\text{CPU}}}<\zeta \\
1, &\frac{p_{l,u}^\ast[m]}{P_{\text{CPU}}}\geq \zeta
\end{cases},
\end{equation} 
where $\zeta$ is a very small number, implying that a small value of $p_{l,u}^\ast[m]$ can be deemed as zero. 

\subsubsection{Tractable Constraint}
According to Lemma 1, (\ref{eq:MDD-THz:opt}\text{f}) can be substituted with the new constraint only related to $p_{l,u}[m]$, which is given by
\begin{equation}\label{eq:MDDTHz:newcon}
\left\|\pmb{p}_l[m]\right\|_0 \leq 1, \forall l \in \left\{1,...,L\right\}, m \in \mathcal{M}_{\text{CC}},
\end{equation}   
where $\pmb{p}_l[m]=\left[p_{l,1}[m],...,p_{l,\left|\mathcal{U}_l\right|}[m]\right]^T$. As $L^0$-norm is non-convex, it has to be approximated by the tractable convex functions. In particular, we resort to the exp-based surrogate function for the sake of better performance, which can be expressed as \cite{song2015sparse,sun2021joint}\vspace{-0.5cm}
\begin{equation}\vspace{-0.3cm}
\left\|\pmb{p}_l[m]\right\|_0  \approx \sum_{u=1}^{\left|\mathcal{U}_l\right|}\left(1-\exp\left(-\psi p_{l,u}[m]\right)\right) := \Xi(\pmb{p}_l[m]),
\end{equation}
where $\psi \gg 0$ is a sufficiently large value to guarantee the approximation. Then, since $\Xi(\pmb{p}_l[m])$ is concave with respect to $p_{l,u}[m]\in \mathbb{R}^+$, based on its subgradient inequality, we have
\begin{align}
&\Xi(\pmb{p}_l[m]) \leq \Xi(\pmb{p}_l^{(i)}[m])+\sum_{u=1}^{\left|\mathcal{U}_l\right|}\left[\psi\exp\left(-\psi p_{l,u}^{(i)}[m]\right)\left(p_{l,u}[m]-p_{l,u}^{(i)}[m]\right)\right] :=\tilde{\Xi}^{(i)}(\pmb{p}_l[m]),
\end{align}
where $p_{l,u}^{(i)}[m]$ is the feasible value of $p_{l,u}[m]$ at the $i$-th iteration. Consequently, $\tilde{\Xi}(\pmb{p}_l[m])$ is the convex function with respect to $p_{l,u}[m]$, and the constraint 
\begin{align}\label{eq:MDDTHz:finacon}
\tilde{\Xi}^{(i)}(\pmb{p}_l[m])\leq 1, \forall l \in \left\{1,...,L\right\}, m \in \mathcal{M}_{\text{CC}}
\end{align}
can substitute for \eqref{eq:MDDTHz:newcon}.

\subsubsection{Tractable Objective Function}   
{\small $C_{l,u}^{\text{CC}}\!=\!b^{\text{CC}}\!\sum_{m \in \mathcal{M}_{\text{CC}}}\!\log\left(\!1\!+\!{A_{l,u,m}\left(\pmb{p}[m]\right)}/{B_{l,u,m}\left(\pmb{p}[m]\right)}\!\right)$} in the objective function (\ref{eq:MDD-THz:SE_sub3}a) belongs to the multiple-ratio fractional programming (MRFP) problem, where {\small $A_{l,u,m}\left(\pmb{p}[m]\right)=p_{l,u}[m]\left|\pmb{h}_l^H[m]\pmb{f}^{\text{RZF}}_{l,u}[m]\right|^2$, $B_{l,u,m}\left(\pmb{p}[m]\right)\!=\!\sum_{ l^{\prime}\neq l }\!\sum_{u^{\prime} \in \mathcal{U}_{l^{\prime}}}\! \times$\\ $p_{l^{\prime},u^{\prime}}[m]\!\left|\pmb{h}_l^H[m]\pmb{f}^{\text{RZF}}_{l^{\prime},u^{\prime}}[m]\right|^2\!+\!\left|\mathbb{E}\left\{\text{SI}_l\right\}\right|^2\!+\!\sigma_{\text{n}}^2$}, and {\small $\pmb{p}[m]\in \mathbb{R}^{\sum_{l}\left|\mathcal{U}_l\right| \times 1}$} denotes the power allocation vector for the $m$-th subcarrier. Since $\log(\cdot)$ function is non-decreasing and $\frac{A_{l,u,m}\left(\pmb{p}[m]\right)}{B_{l,u,m}\left(\pmb{p}[m]\right)}$ satisfies the convex-over-concave form, $\left(\mathcal{P}_{3}\right)$ can be equivalently formulated as
\begin{subequations}
\label{eq:MDD-THz:SE_sub3_1}
\begin{align}
&\left(\mathcal{P}_{3-1}\right) \ \underset{\left\{p_{l,u}[m]\right\},\left\{z_{l,u,m}^{(i)}\right\}}{\text{maximize}} \tilde{\chi} \\
&\text{s.t.} \ z_{l,u,m}^{(i)} \in \mathbb{R}, \ p_{l,u}^{(i)}[m] \in \mathbb{R}^+, \ \forall l \in \left\{1,...,L\right\}, u \in \left\{1,...,U\right\}, m \in \mathcal{M}_{\text{CC}}, \\
&b^{\text{CC}}\sum_{m \in \mathcal{M}_{\text{CC}}}\log\left(1+2z_{l,u,m}^{(i)}\sqrt{A_{l,u,m}\left(\pmb{p}[m]\right)}-\left(z_{l,u,m}^{(i)}\right)^2B_{l,u,m}\left(\pmb{p}[m]\right)\right)\geq \tilde{\chi}, \nonumber \\ 
&\forall u\in\left\{1,...,U\right\}, l\in \mathcal{G}_u, \\
&\sum_{m \in \mathcal{M}_{\text{CC}}}\sum_{l=1}^L \sum_{u=1}^U p_{l,u}[m] \leq P_{\text{CPU}}, \ \text{and} ~~(\ref{eq:MDDTHz:finacon}),
\end{align}
\end{subequations}
where $z_{l,u,m}^{(i)}$ is the feasible value at the $i$-th iteration. 

To verify this, we can first rewrite $\left(\mathcal{P}_{3}\right)$ as the maximization of $\tilde{\chi}$ subject to (\ref{eq:MDD-THz:SE_sub3_1}d) and $\tilde{\chi}\leq b^{\text{CC}}\sum_{m \in \mathcal{M}_{\text{CC}}}\log\left(1+{A_{l,u,m}\left(\pmb{p}[m]\right)}/{B_{l,u,m}\left(\pmb{p}[m]\right)}\right)$. According to the principle of QT \cite{shen2018fractional}, the latter constraint can be reformed as 
\begin{equation}
\small 
\tilde{\chi}\leq {\text{maximize}}_{\left\{z_{l,u,m}^{(i)}\right\}} \ \ b^{\text{CC}}\sum_{m \in \mathcal{M}_{\text{CC}}}\log\left(1+2z_{l,u,m}^{(i)}\sqrt{A_{l,u,m}\left(\pmb{p}[m]\right)}-\left(z_{l,u,m}^{(i)}\right)^2B_{l,u,m}\left(\pmb{p}[m]\right)\right).
\end{equation}
As this new constraint is a less-than-max inequality, ${\text{maximize}}_{\left\{z_{l,u,m}^{(i)}\right\}}$ can be integrated into ${\text{maximize}}_{\left\{p_{l,u}[m]\right\}}$, as in (\ref{eq:MDD-THz:SE_sub3_1}a). 

%\vspace{-0.3cm}
\begin{algorithm}
\caption{The Optimization of CPU-to-CAP fronthaul} 
\label{MDDTHz:al2}
\scriptsize
\KwIn{$\mathcal{M}_{\text{CC}}, \left\{\mathcal{C}_l\right\}_{l=1}^L, \left\{\mathcal{U}_l\right\}_{l=1}^L, \left\{\pmb{F}^{\text{RZF}}[m]\right\}_{m=1}^{\left|\mathcal{M}_{\text{CC}}\right|}, \left\{\pmb{H}[m]\right\}_{m=1}^{\left|\mathcal{M}_{\text{CC}}\right|}$;}
\textbf{Initialization:} Randomly set $\left\{\gamma_{l,u,m}^{(0)}\right\}$ under the constraint (\ref{eq:MDD-THz:SE_sub3}c)\; 
Initialize $p_{l,u}^{(0)}[m]=\frac{P_{\text{CPU}}}{\left|\mathcal{M}_{\text{CC}}\right|L}, \forall l \in \left\{1,...,L\right\}, u \in \left\{1,...,U\right\}, m \in \mathcal{M}_{\text{CC}}$\;
Compute $z_{l,u,m}^{(0)}=\frac{\sqrt{A_{l,u,m}\left(\pmb{p}^{(0)}[m]\right)}}{B_{l,u,m}\left(\pmb{p}^{(0)}[m]\right)}$ and set $i=0$\;
Set $\kappa$, $\tilde{\chi}_{\text{lower}}=0$ and $\tilde{\chi}_{\text{upper}}=\underset{u\in\left\{1,...,U\right\}, l\in \mathcal{G}_u}{\min}\underset{\left\{p_{l,u}[m]\right\}}{\text{maximize}}\left\{b^{\text{CC}}\sum_{m\in \left|\mathcal{M}_{\text{CC}}\right|}\log\left(1+\frac{p_{l,u}[m]\left|\pmb{h}_l^H[m]\pmb{f}^{\text{RZF}}_{l,u}[m]\right|^2}{\sigma^2_n}\right), \text{s.t. }\sum_{m\in \mathcal{M}_{\text{CC}}}p_{l,u}[m]\leq P_{\text{CPU}}\right\}$\;
\QB{}{
\Repeat{\em\text{convergence}}{
Set $\tilde{\chi}=\frac{\tilde{\chi}_{\text{upper}}+\tilde{\chi}_{\text{lower}}}{2}$\;
For the fixed $\left\{z_{l,u,m}^{(i)}\right\}$, update $\left\{p_{l,u}^{(i+1)}[m]\right\}$ by solving the following optimization problem: \\ 
\vspace{-0.3cm}
\begin{flalign}\label{eq:MDD-THz:QB}
&\underset{\left\{p_{l,u}[m]\right\}}{\text{minimize}}\sum_{m \in \mathcal{M}_{\text{CC}}}\left\|\pmb{p}[m]\right\|_1, \text{s.t.}  \ (\ref{eq:MDD-THz:SE_sub3_1}c) \ \text{and} \ (\ref{eq:MDD-THz:SE_sub3_1}d)\vspace{-0.3cm}
\end{flalign}
\eIf{\em{(\ref{eq:MDD-THz:QB}) \text{is feasible}}}{Update $\tilde{\chi}_{\textup{lower}}=\tilde{\chi}$ and $\left\{p_{l,u}^{\ast}[m]\right\}=\left\{p_{l,u}^{(i+1)}[m]\right\}$\;
Compute $\left\{\gamma_{l,u,m}^{(i+1)}\right\}$ as in \eqref{eq:MDDTHz:ldmpower} and update $\left\{\gamma_{l,u,m}^{\ast}\right\}=\left\{\gamma_{l,u,m}^{(i+1)}\right\}$\;
Update $z_{l,u,m}^{(i+1)}=\frac{\sqrt{A_{l,u,m}\left(\pmb{p}^{(i+1)}[m]\right)}}{B_{l,u,m}\left(\pmb{p}^{(i+1)}[m]\right)}$ and set $i=i+1$\;
\If{$\tilde{\chi}_{\textup{upper}}-\tilde{\chi}_{\textup{lower}}<\kappa
$}{$\tilde{\chi}_{\text{upper}}=\tilde{\chi}_{\text{upper}}+\chi_{\text{com}}$;}
}{$\tilde{\chi}_{\textup{upper}}=\tilde{\chi}$\;
} 
}
}
\KwOut{$C^{\text{CC}}= \min_{u\in\left\{1,...,U\right\}, l\in \mathcal{G}_u}\left\{b^{\text{CC}}\sum_{m \in \mathcal{M}_{\text{CC}}}\log\left(1+\frac{A_{l,u,m}\left(\pmb{p}[m]\right)}{B_{l,u,m}\left(\pmb{p}[m]\right)}\right)| \left\{p_{l,u}^{\ast}[m]\right\} \right\}$ and $\left\{\gamma_{l,u,m}^{\ast}\right\}$}
\end{algorithm}

Accordingly, $\left(\mathcal{P}_{3-1}\right)$ can be solved by applying the QT and bisection methods to iteratively optimize the primal variable $\left\{p_{l,u}[m]\right\}$ and the auxiliary variables $z_{l,u,m}^{(i)}$ and $p_{l,u}^{(i)}[m]$. To be more specific, we can initialize the set $\left\{p_{l,u}^{(0)}[m]\right\}$ by equally allocating the power and subcarriers among devices, and then the optimal set $\left\{z_{l,u,m}^{(0)}\right\}$ can be obtained based on the QT principle, i.e., $z_{l,u,m}^{(0)}=\frac{\sqrt{A_{l,u,m}\left(\pmb{p}^{(0)}[m]\right)}}{B_{l,u,m}\left(\pmb{p}^{(0)}[m]\right)}$. Next, we solve the optimization problem, i.e., ${\text{minimize}}_{\left\{p_{l,u}[m]\right\}}\sum_{m \in \mathcal{M}_{\text{CC}}}\left\|\pmb{p}[m]\right\|_1$, subject to (\ref{eq:MDD-THz:SE_sub3_1}c)-(\ref{eq:MDD-THz:SE_sub3_1}e), with $\tilde{\chi}=\frac{\tilde{\chi}_{\text{upper}}+\tilde{\chi}_{\text{lower}}}{2}$, to obtain the optimal set $\left\{p_{l,u}^{(1)}[m]\right\}$. It is noteworthy that after the $i$-th iteration with the $\tilde{\chi}$ limited to $\tilde{\chi}^{(i)}_{\text{upper}}$ and new $\left\{p_{l,u}^{(i)}[m]\right\}$, the updated $\left\{z_{l,u,m}^{(i)}\right\}$ may lead to that the maximum of the left-hand side of (\ref{eq:MDD-THz:SE_sub3_1}c) is larger than $\tilde{\chi}^{(i)}_{\text{upper}}$, due to the nondecreasing feature of QT \cite{shen2018fractional}. Hence, we introduce a compensation factor $\chi_{\text{com}}$ to overcome this problem, that is, after each iteration, if $\tilde{\chi}_{\textup{upper}}-\tilde{\chi}_{\textup{lower}}<\kappa$, we set $\tilde{\chi}^{(i)}_{\text{upper}}=\tilde{\chi}^{(i)}_{\text{upper}}+\chi_{\text{com}}$, where $\kappa$ is a predefined very small value. The above-mentioned processes can be iteratively implemented until convergence is achieved. In summary, the overall optimization is stated as Algorithm \ref{MDDTHz:al2}. 

Analogously, the optimization of CAP-to-AP fronthaul can be solved using the same method proposed in Algorithm \ref{MDDTHz:al2}, since the optimization of these two fronthaul links have the similar formulation and constraints as shown in \eqref{eq:MDD-THz:CCSINR}, \eqref{eq:MDD-THz:CASINR} and \eqref{eq:MDD-THz:opt}.

\subsection{Overall End-to-End Optimization}
The above-proposed algorithms for the optimization of access and fronthaul links are based on the given results of AP clustering, device selection and the assignment of subcarrier sets. Next, we aim to iteratively assign different numbers of subcarriers between two MDD-based fronthaul links, and select the optimal number of AP clusters in order to achieve the end-to-end optimization in \eqref{eq:MDD-THz:opt}.

%\vspace{-0.5cm}
\begin{algorithm}
\caption{Overall End-to-End Rate Optimization} 
\label{MDDTHz:al3}
\scriptsize
\textbf{Initialization:} Set $\left|\mathcal{M}_{\text{CA}}\right|=\left|\mathcal{M}_{\text{CC}}\right|=\frac{1}{2}\left|\mathcal{M}_{\text{FH}}\right|$, $L_{\text{ini}}=\frac{U}{U_{\text{max}}}$, $i=1$, $l_{\text{step}}$ and $m_{\text{step}}$\; 
\While{$L_{\text{ini}}+(i-1)l_{\textup{step}}\leq Q$}{
Obtain $\left\{\mathcal{C}_l\right\}_{l=1}^L, \left\{\mathcal{U}_l\right\}_{l=1}^L$ and $\left\{\mathcal{G}_u\right\}_{u=1}^U$ using the heuristic methods in Section \ref{section:MDDTHz:APC}\;
Compute $C_i^{\text{AD}}$ by solving \eqref{eq:MDD-THz:SE_resub11}\;
\Repeat{$\left|C^{\text{CC}}_i-C^{\text{CA}}_i\right|\leq\kappa^{'}$}{
Set $j=1$\;
Obtain $\mathcal{M}_{\text{CC}}$ and $\mathcal{M}_{\text{CA}}$ using the heuristic method in Section \ref{section:MDDTHz:MCA}\;
Compute $C_i^{\text{CC}}$ and $C_i^{\text{CA}}$ by Algorithm \ref{MDDTHz:al2}\;
\eIf{$C_i^{\text{CC}}>C_i^{\text{CA}}$}{$\left|\mathcal{M}_{\text{CA}}\right|=\left|\mathcal{M}_{\text{CA}}\right|+\vartheta^{(j-1)} \cdot m_{\text{step}}$;}{$\left|\mathcal{M}_{\text{CA}}\right|=\left|\mathcal{M}_{\text{CA}}\right|-\vartheta^{(j-1)} \cdot m_{\text{step}}$;}
Update $j=j+1$\;
}
Obtain $C_i^{\text{FH}}=\min\left\{C_i^{\text{CC}},C_i^{\text{CA}}\right\}$\;
Update $i = i+1$\;
}
\KwOut{$C^\ast = \underset{i}{\text{maximize}} \min\left\{C_i^{\text{AD}},C_i^{\text{FH}}\right\}$}
\end{algorithm}%\vspace{-0.5cm}

\subsubsection{Iterative Assignment of Subcarrier Sets}
In Section \ref{section:MDDTHz:front}, given the fixed assignment of subcarrier sets, the optimization of each fronthaul can be solved by Algorithm \ref{MDDTHz:al2}. However, considering that the total fronthaul rates is calculated as $C^{\text{FH}}=\min\left\{C^{\text{CC}},C^{\text{CA}}\right\}$, 
%we can iteratively adjust the sizes of $\mathcal{M}_{\text{CC}}$ and $\mathcal{M}_{\text{CA}}$ to maximize $C^{\text{FH}}$.
%the maximization of the total achievable fronthaul rates, i.e., the problem $\left(\mathcal{P}_2\right)$, depends on the dynamic assignment of subcarrier sets. 
%Therefore, 
it is intuitive to highlight the following trade-off: Provided that the transmit power is fixed, the more subcarriers are allocated, the higher rates can be achieved, owing to the subcarrier diversity gain. Therefore, in order to maximize $C^{\text{FH}}$, the sizes of two subcarriers sets (i.e., $\mathcal{M}_{\text{CC}}$ and $\mathcal{M}_{\text{CA}}$) can be iteratively increased or decreased by following a pre-defined step namely $m_{\text{step}}$, according to the relationship between two fronthaul rates. To be more specific, after optimizing the two fronthaul links separately for given $\left|\mathcal{M}_{\text{CA}}\right|=\frac{1}{2}\left|\mathcal{M}_{\text{FH}}\right|$, the iterative assignment of subcarrier sets commences. If $C^{\text{CC}}>C^{\text{CA}}$, the size of $\mathcal{M}_{\text{CA}}$ is increased to $\left|\mathcal{M}_{\text{CA}}\right|=\left|\mathcal{M}_{\text{CA}}\right|+\vartheta^{(j-1)} \cdot m_{\text{step}}$, where $\vartheta$ is a decay factor, $j$ is the current iteration index. Then, the subcarrier assignment method proposed in Section \ref{section:MDDTHz:MCA} is re-calculated. Otherwise, if $C^{\text{CC}}<C^{\text{CA}}$, the subcarrier assignment process is re-run with the size of $\mathcal{M}_{\text{CA}}$ decreased to $\left|\mathcal{M}_{\text{CA}}\right|=\left|\mathcal{M}_{\text{CA}}\right|-\vartheta^{(j-1)} \cdot m_{\text{step}}$. This process ends up with $\left|C^{\text{CC}}-C^{\text{CA}}\right|\leq\kappa^{'}$.         

\subsubsection{Iterative Optimization of Cluster Number}
Given that the AP clustering is accomplished by the heuristic method proposed in Section \ref{section:MDDTHz:APC}, the trade-off between access and two fronthaul links only depends on the number of AP clusters $L$, which can be highlighted as follows: (i) Changing the number of AP clusters may impose distinct impact on two fronthaul links. In particular, a smaller $L$ results in more power received at each CAP and hence the CPU-to-CAP fronthaul rates are increased. By contrast, a smaller $L$ means that the average number of APs in each AP cluster is increased, which leads to less power received at APs and causes lower CAP-to-AP fronthaul rates. (ii) On the other hand, the number of AP clusters has an ambiguous impact on the AP-to-Device access links. To be more specific, when $L$ is smaller, the total number of APs serving a particular device $u$ is increased. However, it is possible that most of the APs have negligibly small channel gains to the device $u$, and hence the power transmitted to the device $u$ may cause extra interference to the nearby devices. By contrast, when $L$ becomes larger, the number of APs within each cluster is decreased. Then, with the constraints of $C_{\text{max}}$ and $U_{\text{max}}$, each device may only be served by its nearby AP clusters, and each AP cluster only allocates power to the devices with good channel conditions. In this case, the achievable rate of access links relies on the specific distributions of APs and devices, rather than the number of AP clusters.

According to the above analysis, the overall end-to-end optimization can be implemented as follows: (i) We firstly determine the range of $L$ to be $\frac{U}{U_{\text{max}}}\leq L \leq Q$, where $\frac{U}{U_{\text{max}}}\leq L $ guarantees that every device is served by at least one AP cluster\footnote{Note that, this value range is only applicable to DC, as it is capable of dynamically adjusting the AP clustering. By contrast, the implementation of AP grouping in SC and IDSC rely on the sub-area division, and for the sake of practical application, only few possible numbers of AP clusters will be considered during simulations in Section \ref{sec:MDDTHz:sim}. }. Then, $L$ is initialized to $L_{\text{ini}}=\frac{U}{U_{\text{max}}}$ and $\left|\mathcal{M}_{\text{CA}}\right|=\frac{1}{2}\left|\mathcal{M}_{\text{FH}}\right|$; (ii) Next, we set $i=1$, and $L_i=L_{\text{ini}}+(i-1)l_{\text{step}}$, where $l_{\text{step}}$ is a pre-defined step. Then, $C_i^{\text{CC}}, C_i^{\text{CA}}$ and $C_i^{\text{AD}}$ can be obtained, followed by the proposed iterative assignment of subcarrier sets. The optimal fronthaul rate is derived as $C_i^{\text{FH}}=\argmax_{\left\{\mathcal{M}_{\text{CC}},\mathcal{M}_{\text{CA}}\right\}}\min\left\{C_i^{\text{CC}}, C_i^{\text{CA}}\right\}$; (iii) The process continues until $L_i=Q$, and the optimal end-to-end rate can be obtained as 
%\begin{equation}
$C^\ast = \underset{i}{\text{maximize}} \min\left\{C_i^{\text{AD}},C_i^{\text{FH}}\right\}$.
%\end{equation} 
Algorithm \ref{MDDTHz:al3} summarizes the overall end-to-end optimization process.

%\begin{remark}
%Note that when $L$ is increased to $Q$, and therefore each AP cluster only has one AP, our proposed MDD-enabled two-tier fronthaul structure degrades to the conventional singel-layer TDD-based fronthaul system, which can be a benchmark for comparison in the simulation part. 
%\end{remark}     

\section{The Advanced Design of MDD Frame Structure}
In this section, followed by the proposed solution for jointly optimizing two fronthaul and access links, we integrate it into the specifically designed MDD frame structure so as to further improve the end-to-end rate. In order to exhibit the advantage of MDD, the TDD-based frame structure compatible with our designed two-tier wireless fronthaul architecture is also presented as a benchmark.

\begin{figure}[tbp]\vspace{-0.5cm}
\centering
\begin{minipage}[t]{0.45\textwidth}
\centering
\includegraphics[width=0.95\linewidth]{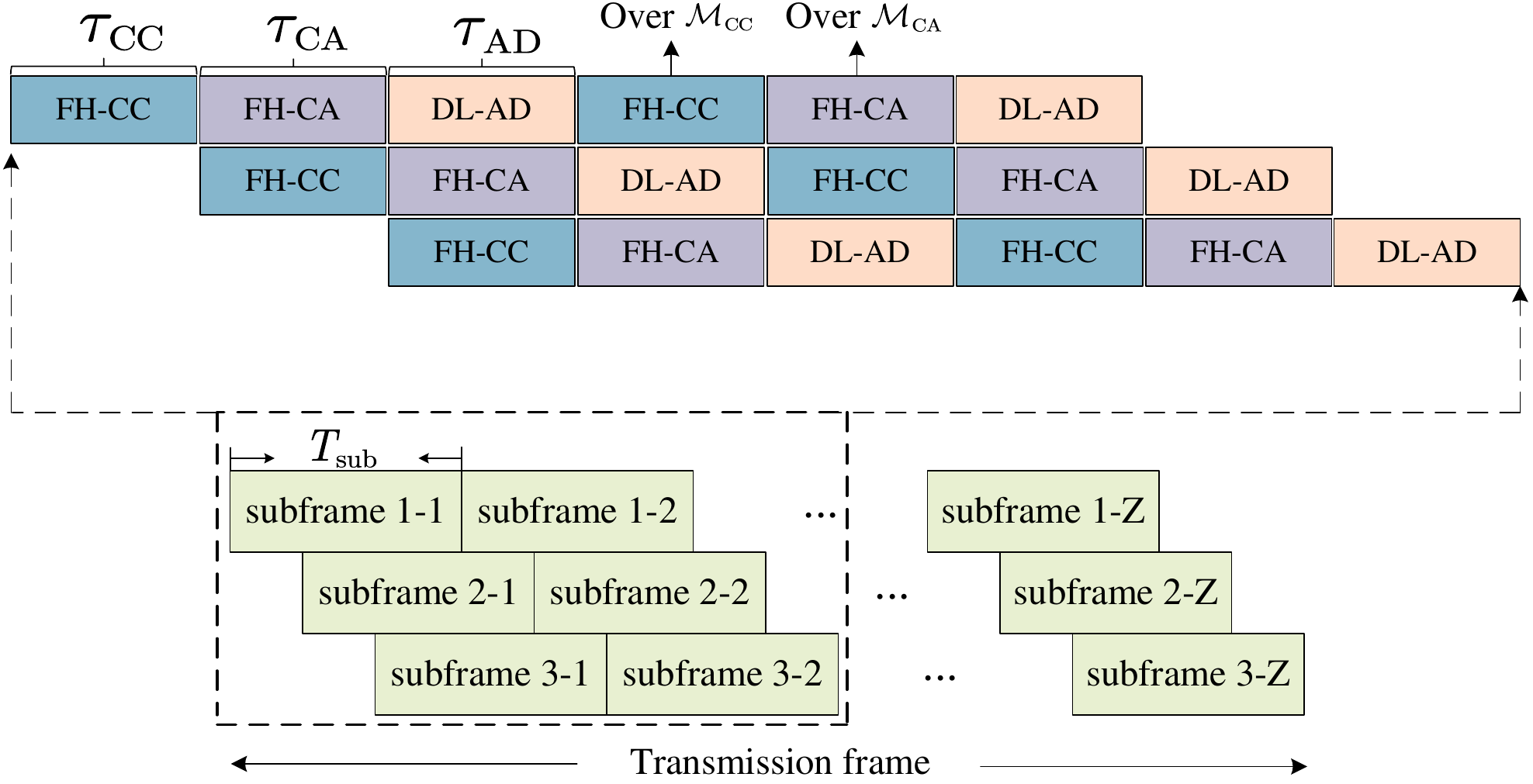}
\vspace{-0.5cm}
\caption{Frame structure of MDD-enbaled THz fronthaul.}
\label{figure-MDDTHz-MDDF}
\end{minipage}
\begin{minipage}[t]{0.45\textwidth}
\centering
\includegraphics[width=0.95\linewidth]{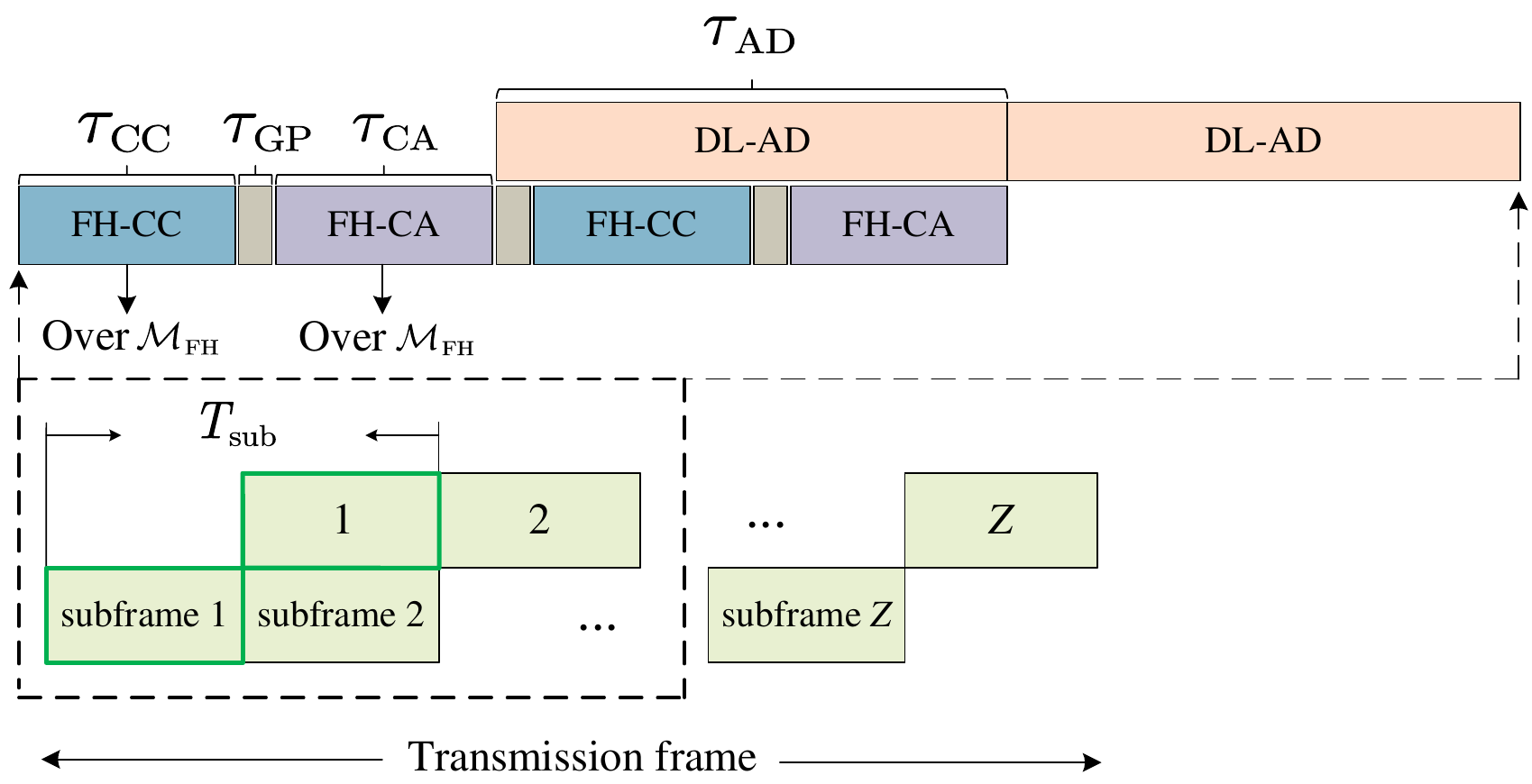}
\vspace{-0.5cm}
\caption{Frame structure of TDD-enbaled THz fronthaul.}
\label{figure-MDDTHz-TDDF}
\end{minipage}\vspace{-0.5cm}
\end{figure}

\subsection{Frame Structure for MDD-Enabled Fronthaul System}
Since two fronthaul transmissions are implemented in the MDD mode over the THz band and DL communications take place over the sub-6 GHz band, the MDD frame structure can be designed in a parallel way, as shown in Fig. \ref{figure-MDDTHz-MDDF}. Specifically, we consider a frame of transmission period within the coherence time, which is split into three parallel streams and each of them includes $Z$ subframes. Let the duration of one subframe be $T_{\text{sub}}$. The CPU-to-CAP, CAP-to-AP and AP-to-Device transmission durations are assumed to account for the same proportion within one subframe, i.e., $\tau_{\text{CC}}=\tau_{\text{CA}}=\tau_{\text{AD}}=\frac{1}{3}$, and the total duration of transmission frame is equal to $(Z+{2}/{3})T_{\text{sub}}$. Then, when $Z\rightarrow \infty$, the rates of two fronthaul and access links within one transmission frame are $\frac{Z C^{\text{CC}}}{(Z+2/{3})}\approx C^{\text{CC}},\frac{Z C^{\text{CA}}}{(Z+{2}/{3})}\approx C^{\text{CA}}$ and $\frac{Z C^{\text{AD}}}{(Z+{2}/{3})}\approx C^{\text{AD}}$, respectively. Therefore, the end-to-end rate optimization within one transmission frame is equal to the problem in \eqref{eq:MDD-THz:opt}, and can be solved by Algorithm \ref{MDDTHz:al3}.

Notably, in order to avoid the transmission collision, the assumption of $\tau_{\text{CC}}=\tau_{\text{CA}}=\tau_{\text{AD}}$ is essential to our proposed MDD frame structure. In this case, although the rate optimization between fronthaul and access links can not be further tuned via optimizing the proportion of their durations after solving \eqref{eq:MDD-THz:opt}, the proposed frame structure can significantly improve the end-to-end rates with the aid of three simultaneous data streams. 

%\vspace{-0.5cm}
\begin{algorithm}
\caption{The Overall End-to-End Rate Optimization for TDD-Enabled Fronthaul System} 
\label{MDDTHz:al4}
\scriptsize
\textbf{Initialization:} Set $\left|\mathcal{M}_{\text{CA}}\right|=\left|\mathcal{M}_{\text{CC}}\right|=\left|\mathcal{M}_{\text{FH}}\right|$, $L_{\text{ini}}=\frac{U}{U_{\text{max}}}$, $i=1$ and $l_{\text{step}}$\; 
\While{$L_{\text{ini}}+(i-1)l_{\textup{step}}\leq Q$}{
Obtain $\left\{\mathcal{C}_l\right\}_{l=1}^L, \left\{\mathcal{U}_l\right\}_{l=1}^L$ and $\left\{\mathcal{G}_u\right\}_{u=1}^U$ using the heuristic methods in Section \ref{section:MDDTHz:APC}\;
Compute $C_i^{\text{AD}}$ by solving \eqref{eq:MDD-THz:SE_resub11}\;
Compute $C_i^{\text{CC}}$ and $C_i^{\text{CA}}$ by Algorithm \ref{MDDTHz:al2}\;
Solve the convex problem (provided that $Z\rightarrow \infty$):
\begin{flalign}\label{eq:MDD-THz:TDDEE}
&C_i = \underset{\tau_{\text{CC}},\tau_{\text{CA}},\tau_{\text{AD}}}{\text{maximize}} \ \min\left\{\frac{\tau_{\text{CC}}}{\tau_{\text{CC}}+\tau_{\text{CA}}+2\tau_{\text{GP}}}C_i^{\text{CC}},\frac{\tau_{\text{CA}}}{\tau_{\text{CC}}+\tau_{\text{CA}}+2\tau_{\text{GP}}}C_i^{\text{CA}}, \frac{\tau_{\text{AD}}}{\tau_{\text{CC}}+\tau_{\text{CA}}+2\tau_{\text{GP}}}C_i^{\text{AD}}\right\} \nonumber \\
& \text{s.t.}  \ \tau_{\text{CC}}+\tau_{\text{CA}}+\tau_{\text{AD}}+ \tau_{\text{GP}}\leq 1 \ \text{and} \ \tau_{\text{AD}}\leq \tau_{\text{CC}}+\tau_{\text{CA}}+2\tau_{\text{GP}}. 
\end{flalign}

Set i = i+1\;
}
\KwOut{$C^\ast = \underset{i}{\text{maximize}} \left\{C_i\right\}$}
\end{algorithm}
\vspace{-0.5cm}

\subsection{Frame Structure for TDD-Enabled Fronthaul System}
As for comparison, we also present the TDD frame structure for our proposed fronthaul system, which is shown in Fig. \ref{figure-MDDTHz-TDDF}. Specifically, the TDD transmission frame is divided into $Z$ subframes. Each subframe accounts for $T_{\text{sub}}$ duration including two fronthauls, DL transmission and guard period. Different from MDD-enabled fronthaul, since both two fronthauls in TDD happen over all the subcarriers of the same THz band, they need to be accurately separated for avoiding collision and implemented in a sequential way. Moreover, as DL signals are transmitted over sub-6 GHz band, as shown in Fig. \ref{figure-MDDTHz-TDDF}, the DL transmission of the previous subframe can coexist with the fronthaul transmissions of the current subframe. However, in order to avoid the transmission delay caused by the previous subframe, we assume $\tau_{\text{AD}}\leq \tau_{\text{CC}}+\tau_{\text{CA}}+2\tau_{\text{GP}}$ such that the previous DL transmission can be finished before the current DL transmission starts. The overall end-to-end rate optimization of the TDD-enabled fronthaul system is summarized in Algorithm \ref{MDDTHz:al4}. Compared with MDD, although TDD is not able to balance $C^{\text{CC}}$ and $C^{\text{CA}}$ through adjusting the sizes of $\mathcal{M}_{\text{CC}}$ and $\mathcal{M}_{\text{CA}}$, it can allocate the time durations for two fronthaul and access links to achieve the optimal end-to-end rate.  

\begin{table}
\vspace{-0.5cm}
\caption{Simulation parameters}
\vspace{-0.5cm}
\tiny
\centering
\begin{tabular}{|l|l|}
\hline
Default parameters & Value  \\ \hline
CPU's and AP's Power budget ($P_{\text{CPU}}, P_{\text{AP}}$) & $(35,45)$ dBm \\ \hline
Transmit and Receive antenna gain ($G_t,G_r$) & (20,20) dBi \\ \hline
Absorption coefficient ($k_{\text{abs}}$) & $0.0033 \ \text{m}^{-1}$ \\ \hline
Fresnel reflection coefficient & 0.15  \\ \hline
Roughness factor & $0.088\times 10^{-3}$ \\ \hline
Number of delay taps ($T_{\text{delay}}$) & 6 \\ \hline 
Number of NLoS path ($N_{\text{ray}}$) & 3  \\ \hline
Cyclic prefix length ($N_{\text{CP}}$) & 16  \\ \hline
Sampling rate ($T_{\text{s}}$) & $7.8\times 10^{-12}$ s \\ \hline
NLoS path delay ($\delta$) & $\delta \sim \mathcal{U}[0,N_{\text{CP}} T_{\text{s}}]$ \\ \hline
Azimuth AoD ($\phi$) & $\phi \sim \mathcal{U}(-\pi,\pi)$ \\ \hline
Elevation AoD ($\theta$) & $\theta \sim \mathcal{U}(-\frac{\pi}{2},\frac{\pi}{2})$ \\ \hline
\end{tabular}
\label{Table:MDDTHz:para}\vspace{-0.6cm}
\end{table}

\vspace{-0.3cm}
\section{Simulation Results}\label{sec:MDDTHz:sim}

\vspace{-0.5cm}
\subsection{Parameters and Setup}

In the simulation, we employ the proposed two-tier THz fronthaul for MDD-CF systems, as shown in Fig. \ref{figure-MDDTHz-Archi}, where an indoor industrial workshop with a square area of $(100 \ \text{m} \times 100 \ \text{m})$ and a roof height of $10$ m is considered. The CPU equipped with a $(8\times 8)$ UPA antenna array is installed on the roof at the central point. The $Q=32$ APs equipped with a $(4\times 4)$ UPA antenna array at transmitter and single antenna at receiver are installed on the wall or pillars, and follow uniform distribution within the $xy$-plane, while their height are uniformly distributed in $[4,6]$ m. The $U=8$ single-antenna devices are uniformly distributed on the ground with a fixed height of $1$ m. $U_{\text{max}}$ and $C_{\text{max}}$ are assumed to be 2 and 4, respectively.

For THz fronthaul channels, we assume that the central carrier frequency is $200$ GHz and the allocated bandwidth for fronthaul is $B^{\text{FH}}=1$ GHz including $\left|\mathcal{M}_{\text{FH}}\right|=32$ subcarriers. Then, each subcarrier's bandwidth can be computed as $b^{\text{FH}}=\frac{B^{\text{FH}}}{\left|\mathcal{M}_{\text{FH}}\right|}$. The pulse shaping function used in (\ref{eq:MDD-THz:THzChannelt}) is given by \cite[Eq. (61)]{alkhateeb2016frequency},
%\begin{align}\label{eq:MDD-THz:pulse}
%p_{\text{rc}}(t)=
%\begin{cases}
%\frac{\pi}{4}\text{sinc}\left(\frac{1}{2\nu_{\text{ro}}}\right), \ t=\pm \frac{T_{\text{s}}}{2\nu_{\text{ro}}},  \\
%\text{sinc}\left(\frac{t}{T_{\text{s}}}\right)\frac{\text{cos}\left(\frac{\pi\nu_{\text{ro}}t}{T_{\text{s}}}\right)}{1-\left(\frac{2\nu_{\text{ro}}t}{T_{\text{s}}}\right)^2}, \ \text{otherwise}
%\end{cases}
%\end{align}
where the roll-off factor is set to 1. For sub-6 GHz access channels, the central carrier frequency is assumed to be $5$ GHz with $B^{\text{AD}}=100$ MHz bandwidth. The large-scale fading coefficient $\left\{\beta_{l,q,u}\right\}, \forall l \in \mathcal{L}, q \in \mathcal{C}_l, u \in \mathcal{U}_l$ is given by
%\begin{equation}
$\beta_{l,q,u}[\text{dB}] = -30.5-36.7\log_{10}(d)+\sigma_{\text{sh}}z$ \cite{demir2021foundations},
%\end{equation}
where $d$ denotes the distance between the $q$-th AP and the $u$-th device, the shadowing is characterized by $\sigma_{\text{sh}}z$ with a standard deviation of $\sigma_{\text{sh}}=4$ dB and $z\sim \mathcal{N}(0,1)$. The noise power is determined based on the bandwidth of fronthaul and access channels following $-174$ dBm/Hz, and hence $\sigma^2_l=\sigma^2_q \neq \sigma^2_u$. We model the SI power as $\sigma^2_{\text{SI}}=\Delta \sigma^2_l$, where the default value of $\Delta$ is -10 dB. Unless otherwise specified, the other parameters are listed in Table I.

\vspace{-0.3cm}
\subsection{Performance Comparison among DC, SC and IDSC}
In this subsection, We will investigate the performance comparison among three AP clustering methods. In Fig. \ref{figure-MDDTHz-MDDDCO}, one random network realization along with the results of AP clustering is presented, where the total numbers of APs and MSs are $Q=32$ and $U=8$, respectively. To be more specific, Fig. \ref{figure-MDDTHz-MDDDCO}(a) shows the 3D distributions of CPU, APs and devices, where the AP grouping and CAP selection are implemented by the DC method. In Fig. \ref{figure-MDDTHz-MDDDCO}(b) and Fig. \ref{figure-MDDTHz-MDDDCO}(c), for the ease of illustration, only the $xy$-plane is plotted without the distribution of devices. As we can observe from Fig. \ref{figure-MDDTHz-MDDDCO}(a) that, with the aid of the DC method, APs can be divided into any number of clusters without the space limitation\footnote{Here, eight AP clusters are used for demonstration. In fact, the DC method can dynamically adjust the AP clusters and finally find the optimal number of AP cluster.}, while SC and IDSC carry out AP clustering relying on sub-areas division. Furthermore, as shown in Fig. \ref{figure-MDDTHz-MDDDCO}(b), since CAPs are fixed at the central points of each sub-area, the AP clustering can not be dynamically changed. As for IDSC, although the placement of CAPs are flexible, it can hardly enable fully dynamic AP clustering like DC. The reason behind is that as the number of sub-areas significantly increases, most of clusters may not have any APs, and hence fail to serve devices. In this case, for the practical implementation, five cases of sub-area division during SC and IDSC are considered in the following simulations, i.e., $L=\left\{4, 8, 12, 16,20 \right\}$.

\begin{figure}[]\vspace{-0.5cm}
\centering
\subfigure[DC]{
\includegraphics[width=0.35\linewidth]{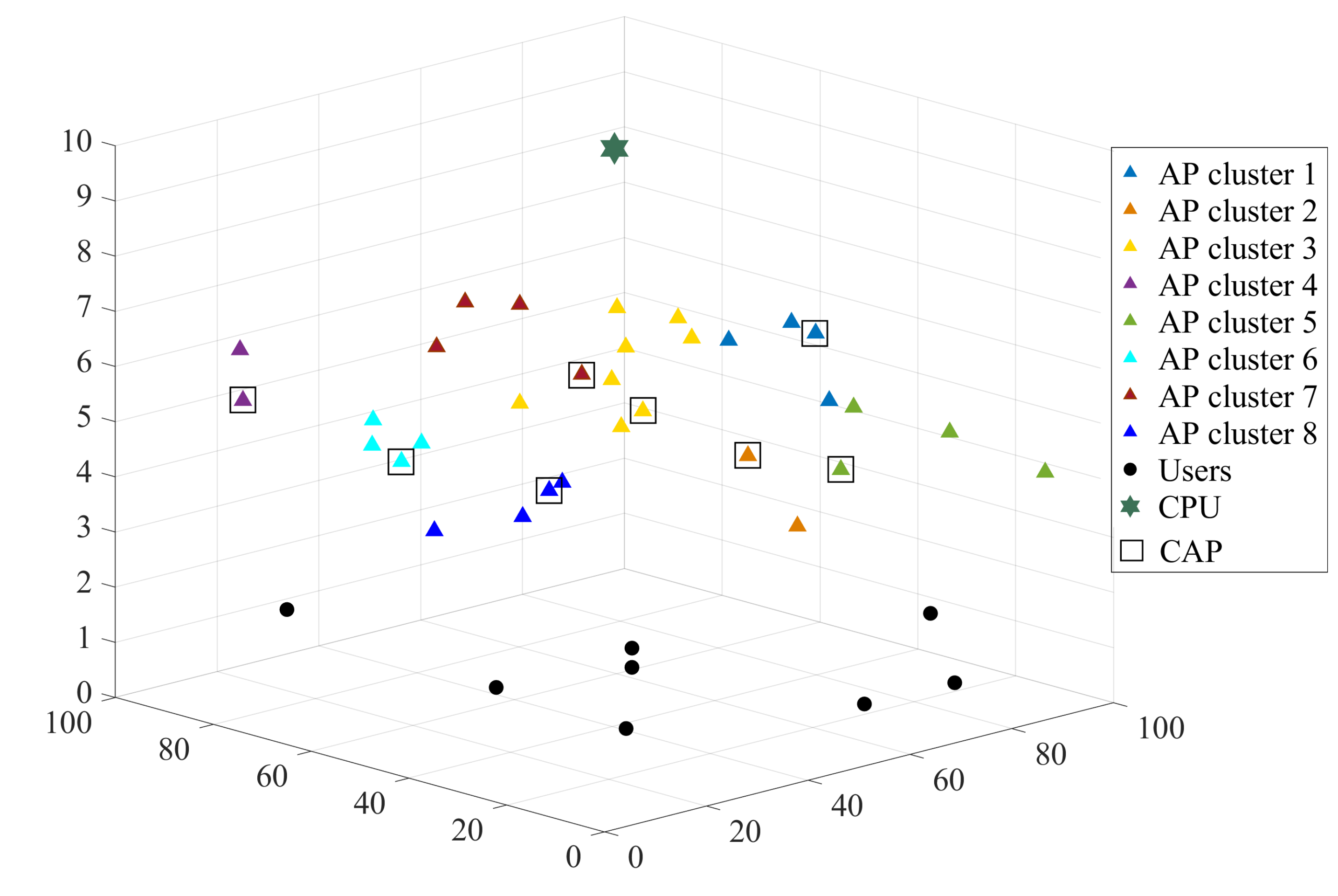}
}
\quad
\subfigure[SC]{
\includegraphics[width=0.25\linewidth]{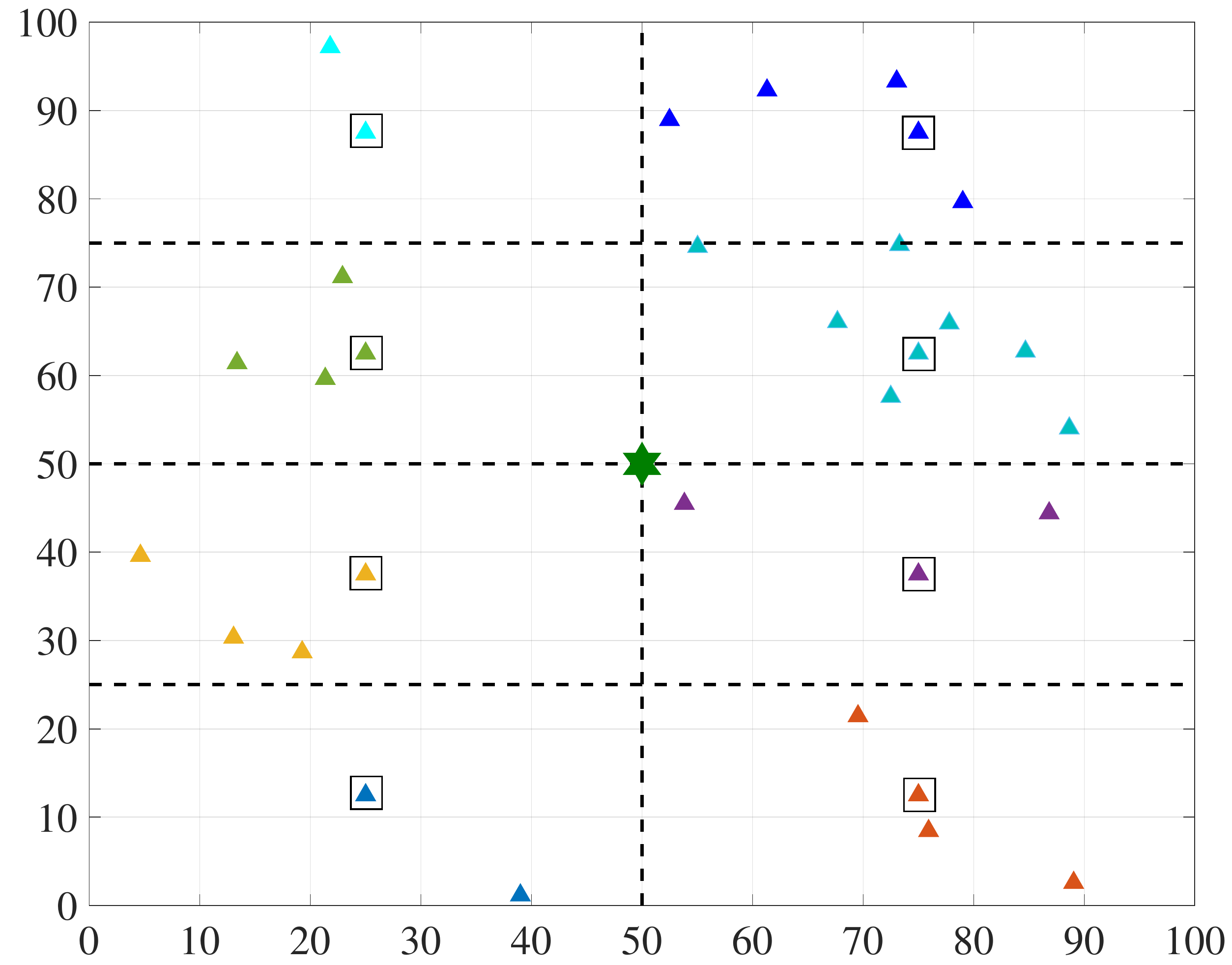}
}
\quad
\subfigure[IDSC]{
\includegraphics[width=0.25\linewidth]{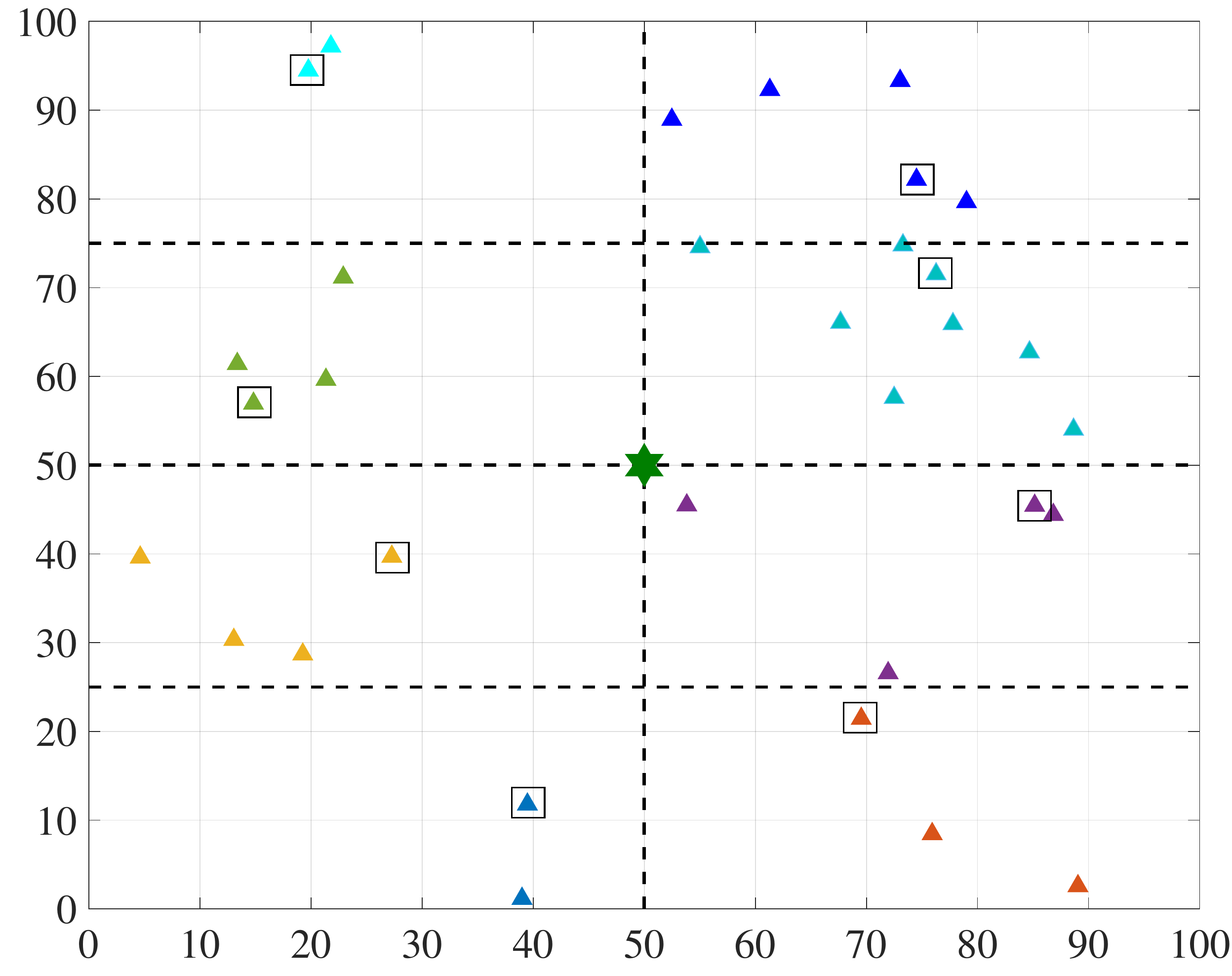}
}
\vspace{-0.5cm}
\caption{The random network realization along with three methods of AP clustering, where $Q=32$, $U=8$ and $L=8$.}
\label{figure-MDDTHz-MDDDCO}\vspace{-0.6cm}
\end{figure}

The performance comparison among DC, SC and IDSC in terms of achievable rates per-device is presented in Fig. \ref{figure-MDDTHz-MDDDSC}, where SC-$l$ denote the SC method is implemented with $l$ AP clusters. It can be observed from Fig. \ref{figure-MDDTHz-MDDDSC}(a) that, when $Q=32$ and $U=8$, DC achieves the similar 90\% likely rates (when the CDF curve is 0.1) with SC-4 and outperforms other methods in terms of median rates (when the CDF curve is 0.5). Especially, DC is more likely to attain the higher rates, with at least 0.2 Gbps rate improvement over other methods. The rationale behind is that with the aid of the {\em K-medoids} method, APs in close proximity can be classified into one cluster, and hence leading to high beamforming gain. However, the situation becomes different for densely distributions APs and devices. e.g., $Q=48$ and $U=16$. As shown in Fig. \ref{figure-MDDTHz-MDDDSC}(b), although DC can still outperform other methods in terms of 90\% likely rates, its 30\% likely rates lags behind SC-16. This is because, with the increasing number of APs, the CAP selection in the DC method may cause the high inter-cluster interference (ICI), when APs belonging to different clusters and staying in close proximity to each other are selected as CAPs\footnote{In the DC method, the CAP selection only considers the LoS complex gains of CPU-to-CAP and CAP-to-AP links, and the potential ICI is neglected. In fact, if ICI is also considered, the weights among two gains and ICI during the CAP selection is of importance, and could be efficiently assigned by deep reinforcement learning method, which is left for our future research.}. Since DC and IDSC use the same method of CAP selection, we can leverage Fig. \ref{figure-MDDTHz-MDDDCO}(c) to further illustrate. As shown in Fig. \ref{figure-MDDTHz-MDDDCO}(c), after the CAP selection, the CAPs of upper-right two clusters situate close to each other, leading to the severe ICI problem. In contrast, CAPs in SC method are always placed at the central points of each sub-area, which can mitigate the ICI to some extent. In addition, since IDSC is not only subject to the rigid AP clustering method, but also has the same ICI problem with DC, it achieves a poor performance in both cases, as observed from Fig. \ref{figure-MDDTHz-MDDDSC}(a) and Fig. \ref{figure-MDDTHz-MDDDSC}(b). It can be concluded that, although SC method with a certain pattern of sub-area division is more likely to attain the higher per-device rates, it fails to be adaptive to the varying sizes of networks. Consequently, the DC method can be deemed as a more promising approach for our proposed two-tier wireless fronthaul systems. 

\begin{figure}[]\vspace{-0.5cm}
\centering
\subfigure[]{
\includegraphics[width=0.25\linewidth]{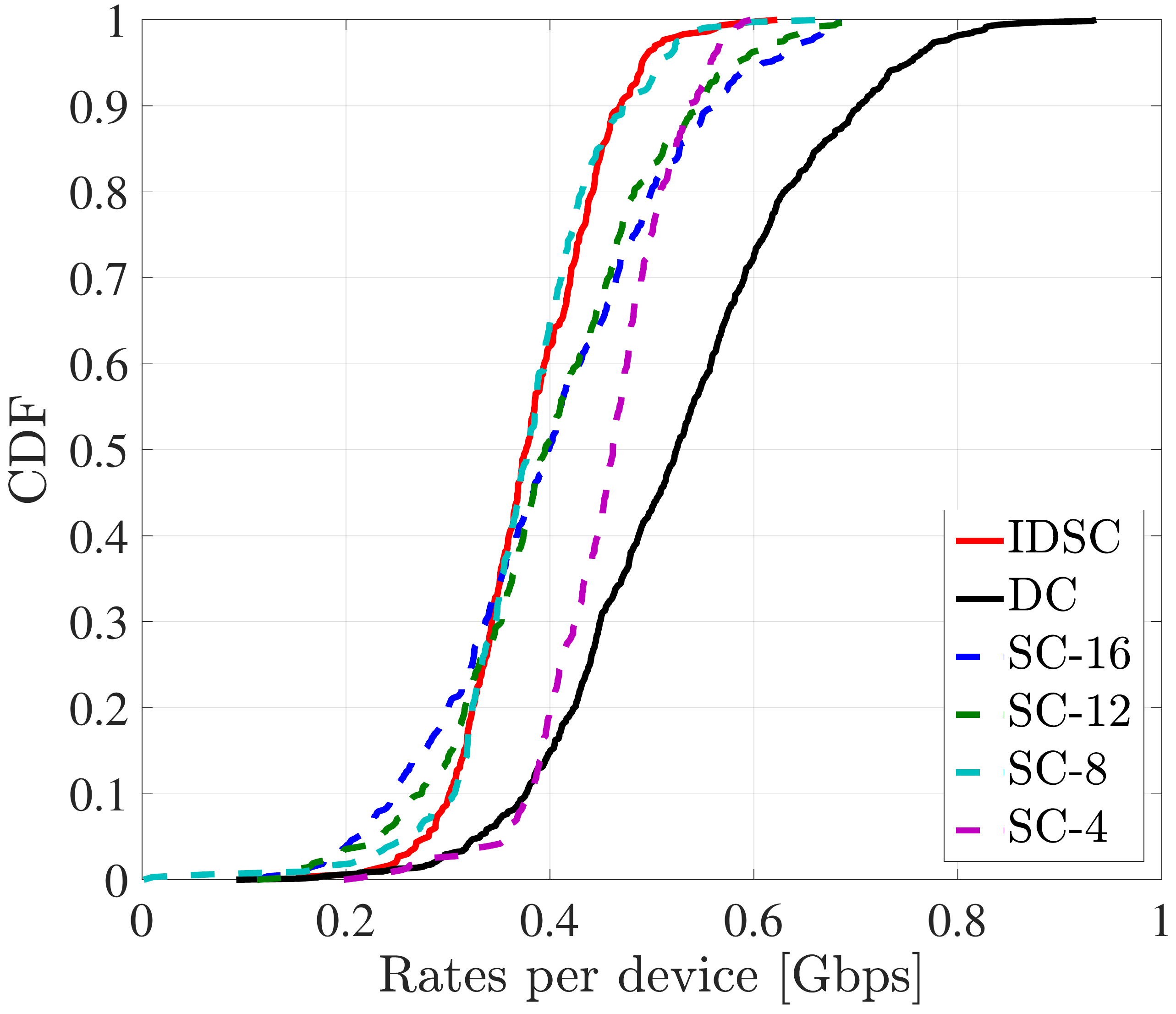}
}
\quad
\subfigure[]{
\includegraphics[width=0.255\linewidth]{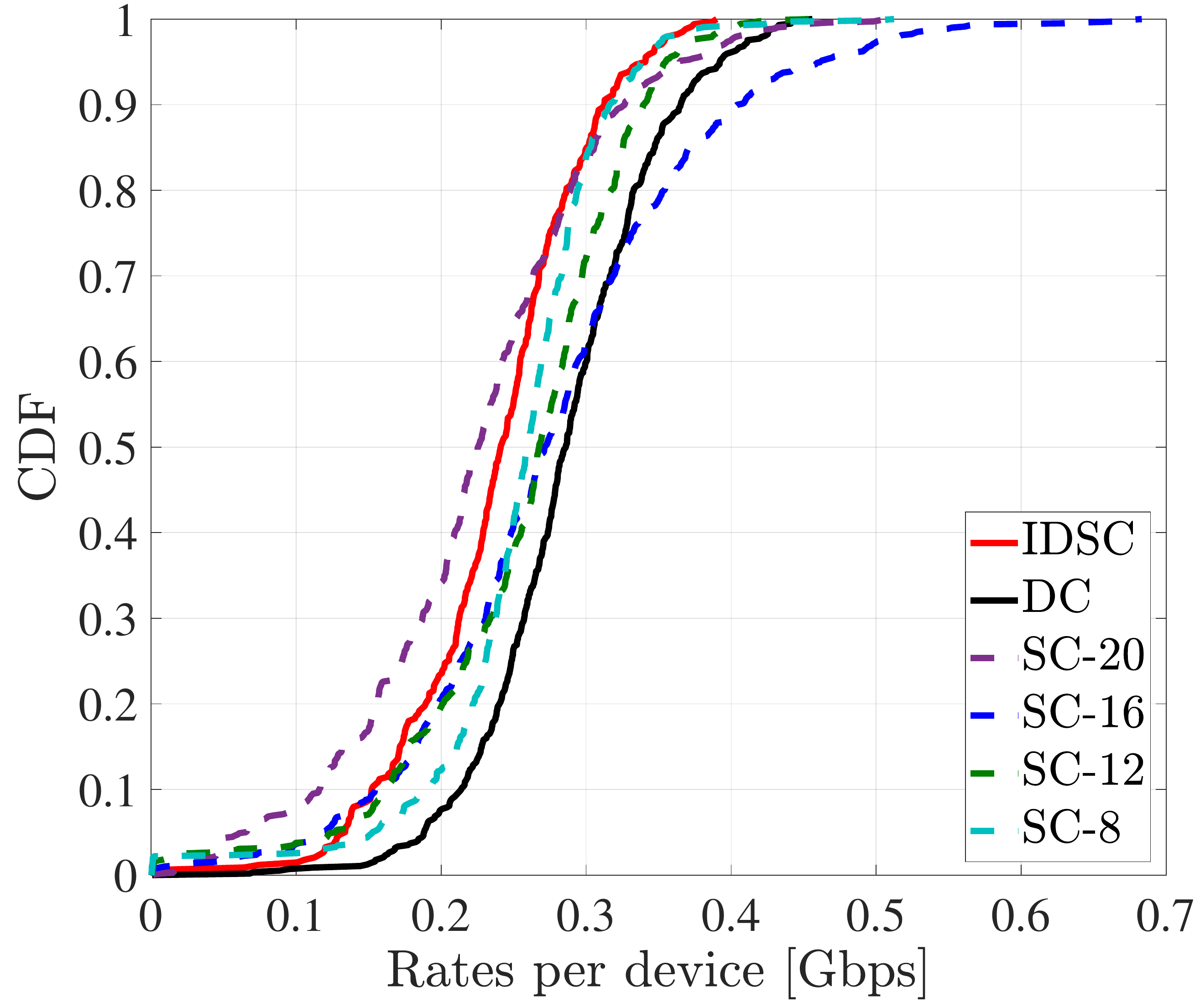}
}
\vspace{-0.5cm}
\caption{Performance comparison between DC, SC and IDSC in terms of achievable rates per-device. (a): $Q=32$ and $U=8$; (b): $Q=48$, $U=16$, and note that as $U_{\text{max}}=2$, the minimum $L$ is set to $8$ for SC method so as to guarantee that each device can be served with at least one AP cluster. Moreover, for fair comparison, SC-20 is added for case (b).}
\label{figure-MDDTHz-MDDDSC}\vspace{-0.8cm}
\end{figure}

Fig. \ref{figure-MDDTHz-MDDDAP}(a)-(b) shows the end-to-end and fronthaul/access rates per-device achieved by the DC method under different numbers of APs and THz bandwidth. To be more specific, in Fig. \ref{figure-MDDTHz-MDDDAP}(a), two-tier fronthaul with 1 GHz bandwidth achieve much lower rates than the AP-to-Device access link. In this case, the end-to-end rates mainly depends on the fronthaul rates. Accordingly, as observed from Fig. \ref{figure-MDDTHz-MDDDAP}(b), the increasing APs may impose the burden on the fronthaul links, and hence lead to the decreased end-to-end rates. In order to lift the end-to-end rate, we increase the THz bandwidth to 4 GHz, and it can be seen from Fig. \ref{figure-MDDTHz-MDDDAP}(a) that, equipped with the sufficient bandwidth, the fronthaul rates are larger than the access rates, and the end-to-end rates of the system can be further enhanced by increasing the number of APs, owing to the extra beamforming gains, as shown in Fig. \ref{figure-MDDTHz-MDDDAP}(b).

Fig. \ref{figure-MDDTHz-MDDDAP}(c)-(d) shows the end-to-end and fronthaul/access rates per-device achieved by the DC method under different numbers of devices and THz bandwidth. Similar to Fig. \ref{figure-MDDTHz-MDDDAP}(a)-(b), the system employed with larger THz bandwidth is able to achieve better performance in terms of fronthaul/access and end-to-end rates. However, a different observation is that with the increasing number of devices, both the fronthaul and access rates decrease, since the available resource such as subcarriers and power for each device are reduced. In particular, when $U$ is increased to 20, it can be observed from Fig. \ref{figure-MDDTHz-MDDDAP}(c) that although the larger THz bandwidth can significantly lift the performance of fronthaul link, the access link is highly affected by the dense distribution of devices due to the severe interference, and hence the improvement of overall end-to-end rates by allocating more THz bandwidth is limited when the number of devices is very large, as shown in Fig. \ref{figure-MDDTHz-MDDDAP}(d). In this regard, we can increase the number of APs to provide more beamforming gains, or apply NOMA to mitigate the inference caused by the neighboring AP clusters, which will be left for our future research.   

\begin{figure}[]\vspace{-0.3cm}
\centering
\subfigure[Fronthaul and access rates, where $U=8$]{
\includegraphics[width=0.2\linewidth]{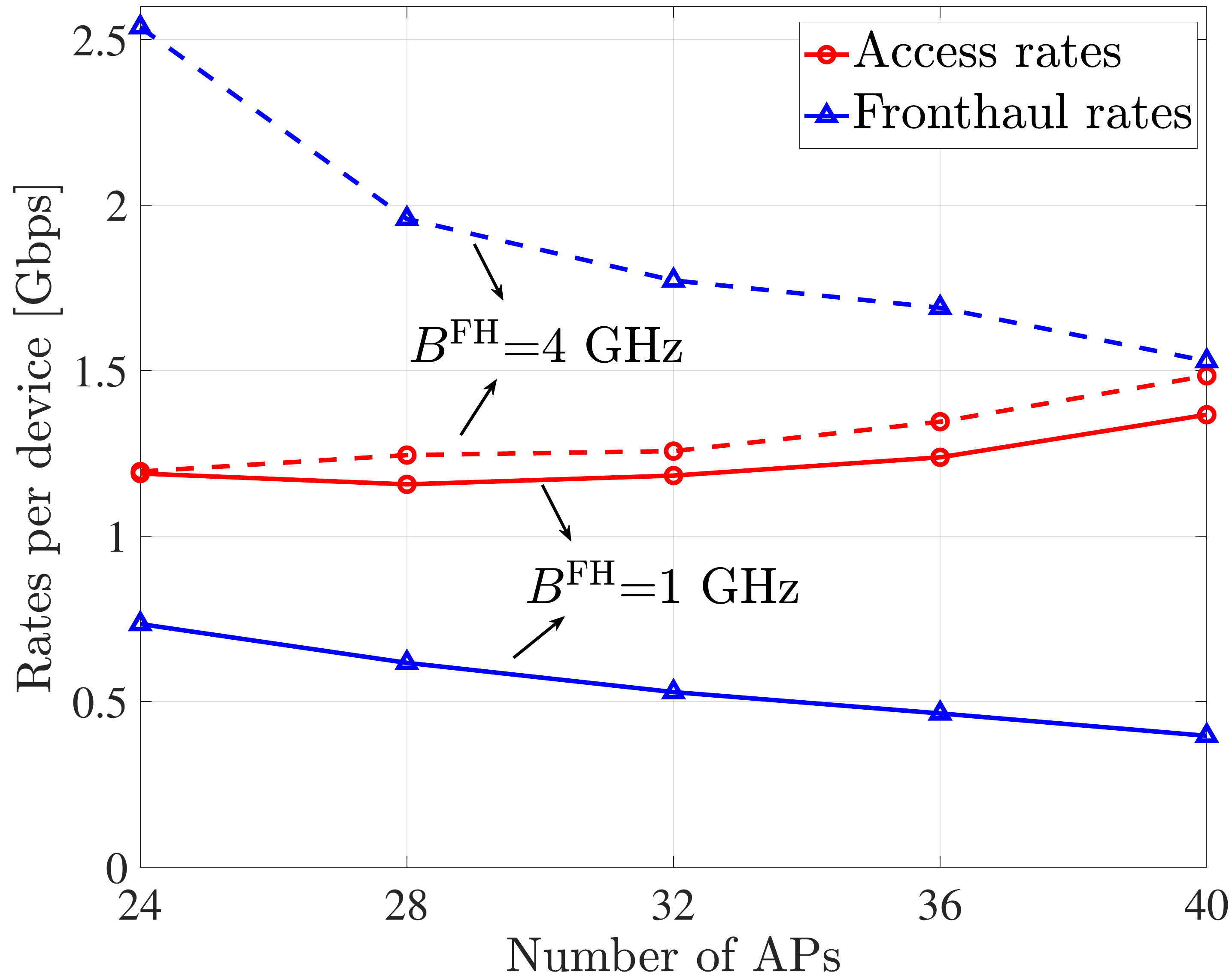}
}
\quad
\subfigure[End-to-end rates, where $U=8$]{
\includegraphics[width=0.2\linewidth]{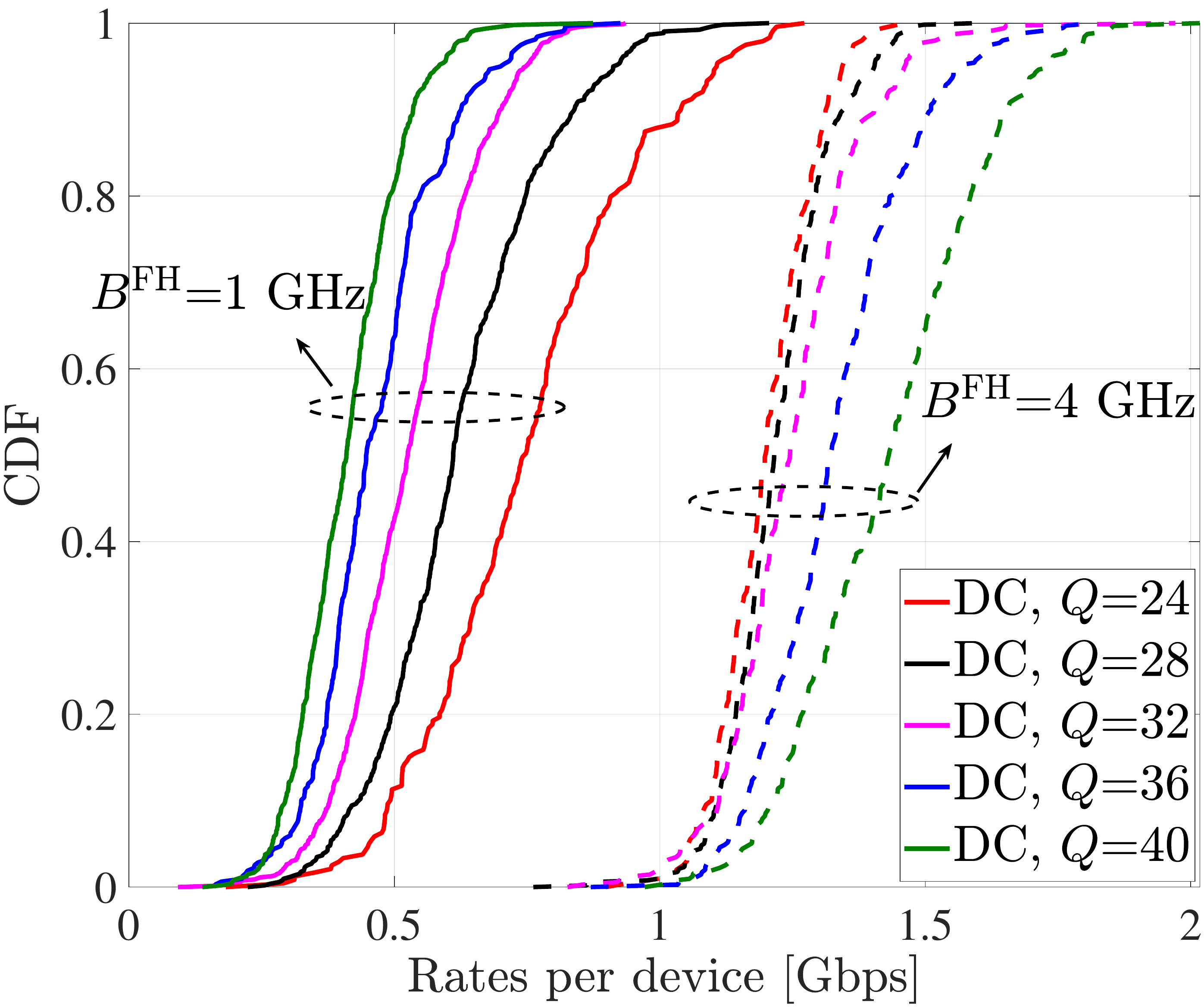}
}
\quad
\subfigure[Fronthaul and access rates, where $Q=32$]{
\includegraphics[width=0.2\linewidth]{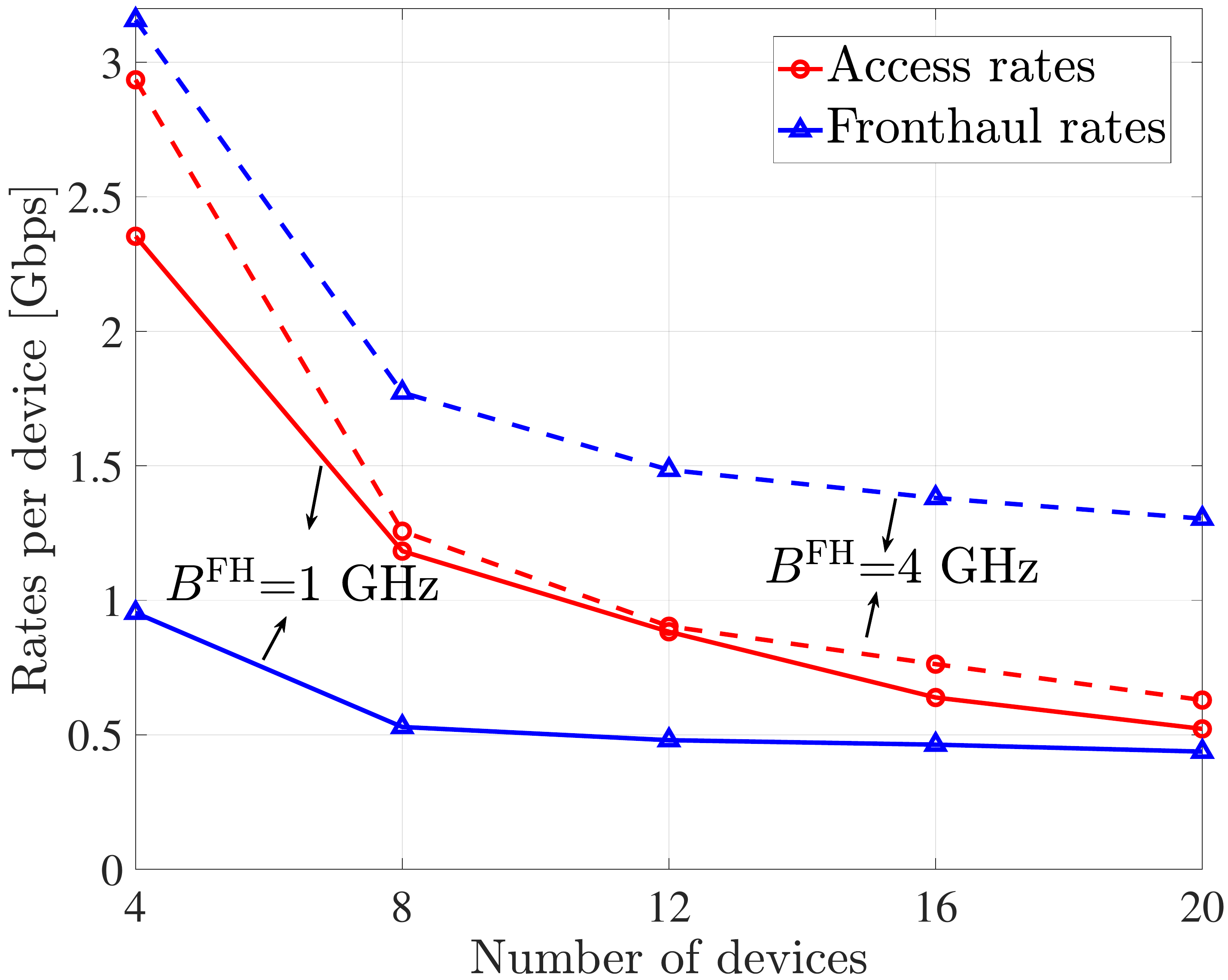}
}
\quad
\subfigure[End-to-end rates, where $Q=32$]{
\includegraphics[width=0.2\linewidth]{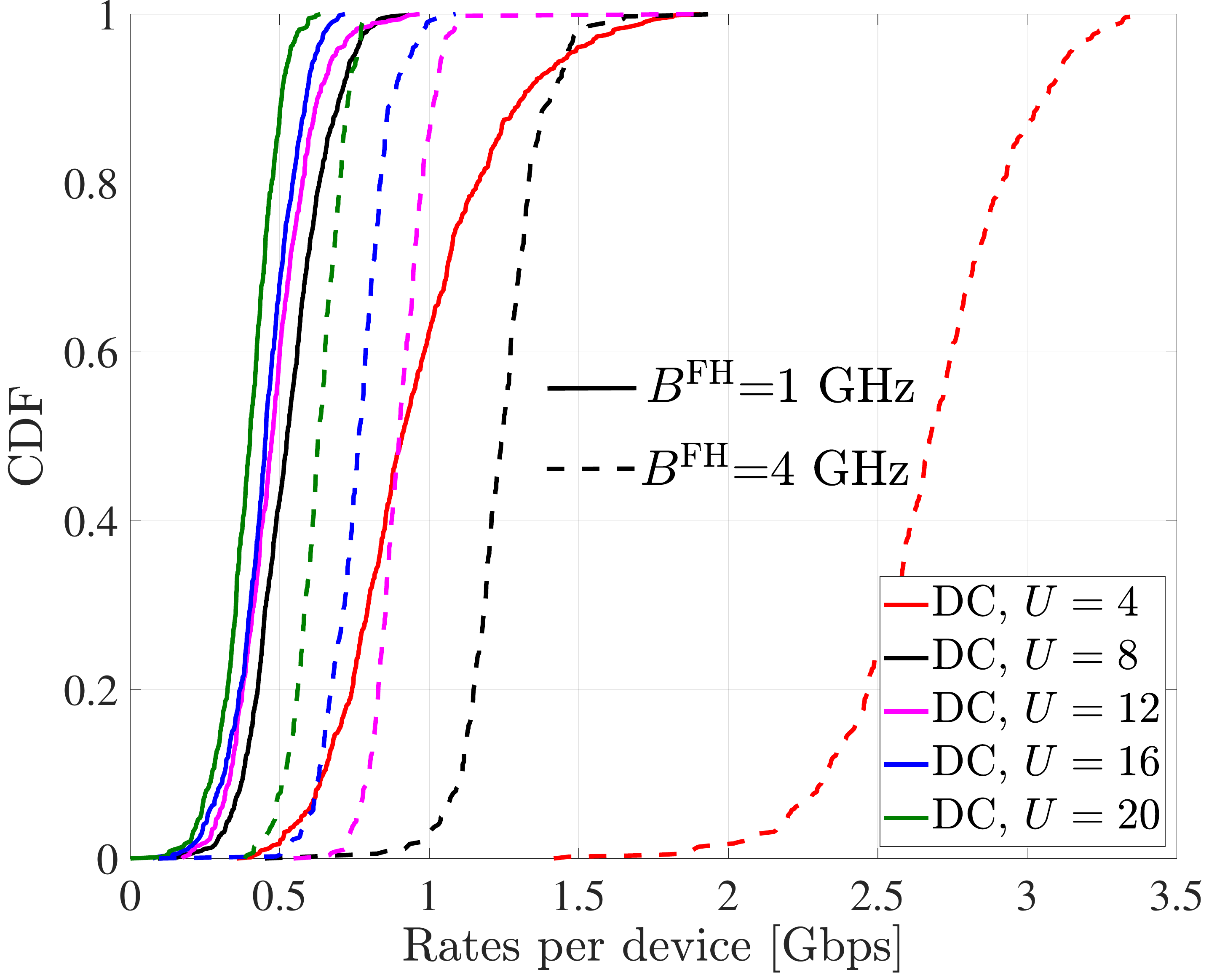}
}
\vspace{-0.3cm}
\caption{The fronthaul/access and overall end-to-end rates achieved by the proposed DC method with different numbers of APs/devices and THz bandwidth.}
\label{figure-MDDTHz-MDDDAP}\vspace{-0.8cm}
\end{figure}

%\begin{figure}[]
%\centering
%\subfigure[Fronthaul and access rates]{
%\includegraphics[width=0.3\linewidth]{THz_DC_MS_AF}
%}
%\quad
%\subfigure[End-to-end rates]{
%\includegraphics[width=0.3\linewidth]{THz_DC_MS}
%}
%\caption{The fronthaul/access and overall end-to-end rates achieved by the proposed DC method with different numbers of MSs and THz bandwidth.}
%\label{figure-MDDTHz-MDDDMS}
%\end{figure}
\vspace{-0.3cm}
\subsection{Performance Comparison among Different Fronthaul Schemes}
Firstly, we compare MDD and TDD modes in our proposed two-tier wireless fronthaul system. The default value of $\tau_{\text{GP}}$ in the TDD frame structure is assumed to be 0.05 according to \cite{3gpp2017nr,etsi2013136}\footnote{The proportion of guard period in the TDD frame structure depends on the subframe configuration \cite{etsi2013136}. Here, we set $\tau_{\text{GP}}=0.05$ to minimize the influence of the guard period on the evaluation of TDD system's performance.}. From Fig. \ref{figure-MDDTHz-MDDDTDD}(a), when $\Delta\leq 20$ dB meaning that the power of the residual SI is 100 times less than that of noise, MDD is capable of outperforming TDD, owing to the well-designed parallel frame pattern and the non-GP configuration. Moreover, if $\Delta$ is increased to 30 dB, the large power of residual SI makes MDD lose its advantages over TDD. It is worth mentioning that compared with the SI cancellation in the conventional IBFD-based sub-6G systems, the mitigation of SI could be much easier in MDD-based THz systems. The reason is that the path loss of LoS in the THz band is extremely large, not to mention that of NLoS. Hence, the most of SI over THz band can be significantly suppressed in the propagation domain by using the classic methods such as adding absorber or blockages between transmitter and receiver \cite{kolodziej2019band}. Then, as desired signal and SI are over orthogonal subcarriers, MDD is able to further cancel the residual SI in the digital domain. In this case, SI is expected to be suppressed under the noise floor (i.e., $\Delta\leq 0$ dB), which makes MDD a promising solution for two-tier wireless fronthaul systems. We further study the impact of $\tau_{\text{GP}}$ in TDD-based systems in Fig. \ref{figure-MDDTHz-MDDDTDD}(b), as expected, TDD is heavily subject to the proportion of guard period. In particular, when $\tau_{\text{GP}}=0.1$, the 90\% likely rates per device of TDD is 0.1 Gbps less than that of MDD. Moreover, it can be also observed that if TDD removes guard period, it may achieve a comparable performance to MDD.

\begin{figure}[]\vspace{-0.5cm}
\centering
\subfigure[]{
\includegraphics[width=0.25\linewidth]{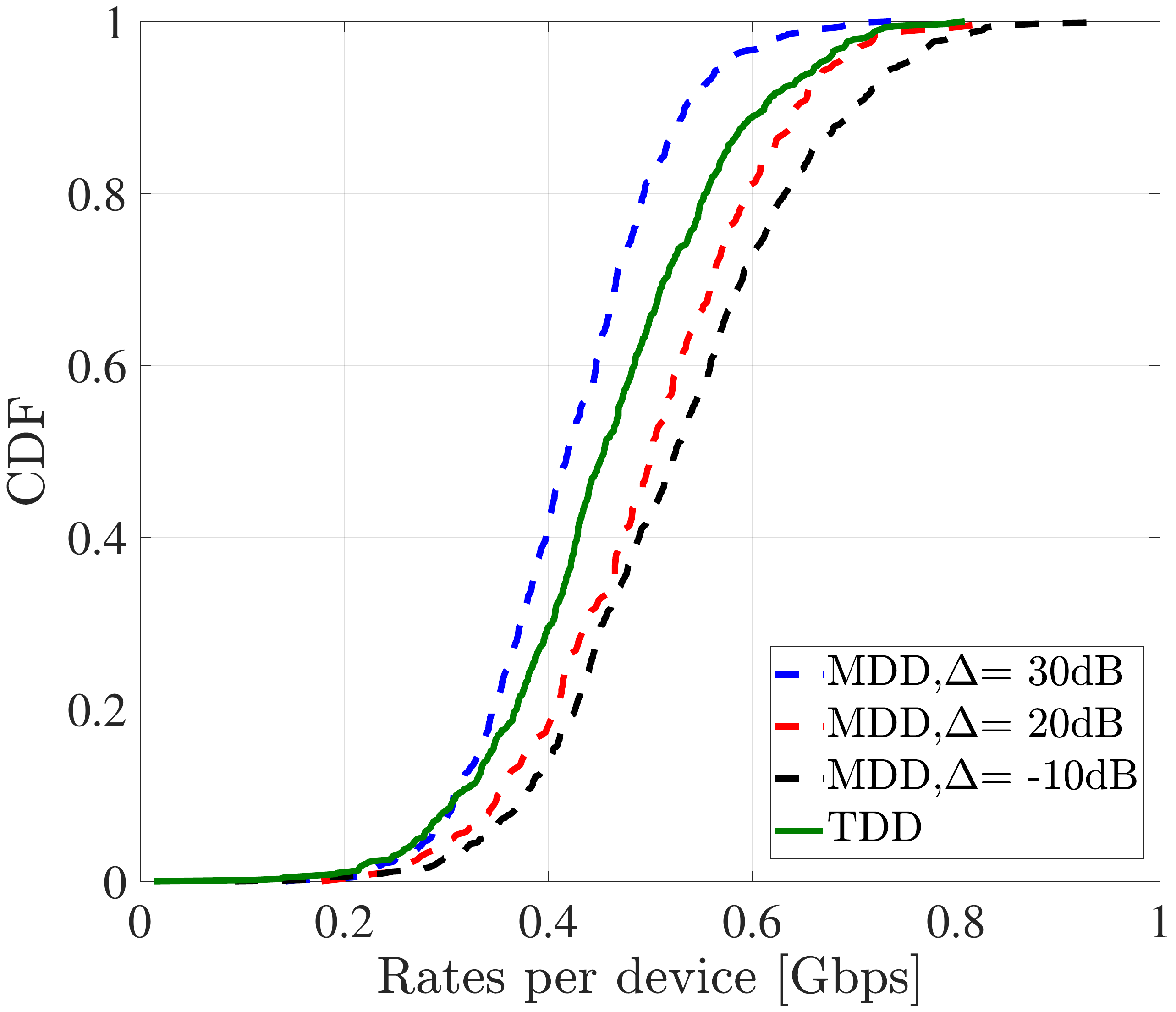}
}
\quad
\subfigure[]{
\includegraphics[width=0.25\linewidth]{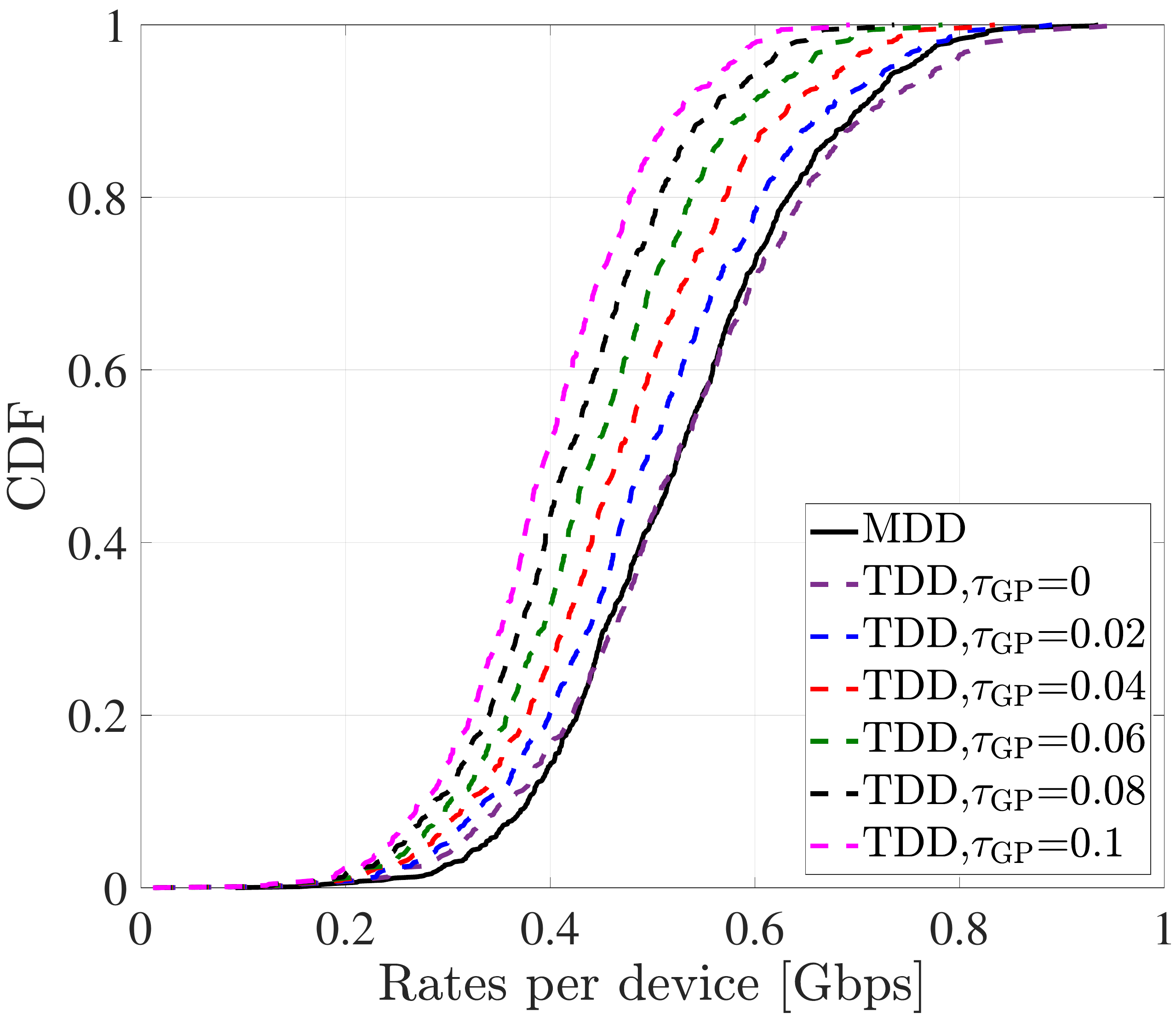}
}
\vspace{-0.5cm}
\caption{Performance comparison between MDD and TDD in our proposed two-tier wireless fronthaul architecture.}
\label{figure-MDDTHz-MDDDTDD}\vspace{-0.8cm}
\end{figure}

Obviously, compared with the traditional fiber-optic or radio stripes fronthaul, the proposed THz-enabled two-tier wireless fronthaul is capable of providing the highest flexibility in deploying APs in an indoor industrial scenario. Nevertheless, the large path loss and interference may cause the reduced fronthaul rates. To address this problem, the abundant bandwidth in THz band can be leveraged to close the performance gap with wired fronthaul schemes.
In this regard, we will make a comprehensive performance comparison among different fronthaul schemes, and verify if our proposed scheme can be comparable to the wired fronthaul schemes in terms of achievable rates in cell-free indoor industrial systems. The following benchmarks are considered:
\begin{itemize}
\item TDD-TTWL: This scheme denotes the TDD-based two-tier wireless fronthaul.
\item CC-HY: According to \cite{demirhan2022enabling}, CC-HY denotes a hybrid fronthaul scheme that the CPU-to-CAP links are operated over the THz band, while each AP within the cluster is connected with the CAP via fiber-optic cables.
\item CA-HY: Inspired by \cite{ammar2021downlink}, in the CA-HY scheme, the CAP-to-AP links within each cluster are based on THz fronthaul, and the wired links connect CPU and CAPs.
\item TTW: This scheme denotes the two-tier wired fronthaul proposed in \cite{zhang2022user}, where both CPU-to-CAP and CAP-to-AP links are implemented over fiber-optic cables. 
\item STWL: This scheme denotes the single-tier wireless fronthaul, where CPU directly sends fronthaul signal to every AP over THz band and the AP clustering is no longer needed.
\item STW: Contrary to STWL, STW enables the single-tier wired connections between CPU and all the APs.
\end{itemize}
Note that apart from STWL and STW, all the other schemes implement AP clustering, and the optimal $L$ is determined by following step 2-18 in Algorithm \ref{MDDTHz:al3} without the process of subcarrier assignment (that is to say, the $\mathcal{M}_{\text{CC}}$ and $\mathcal{M}_{\text{CA}}$ assignment is exclusive in the MDD-based schemes).

\begin{figure}%\vspace{-0.5cm}
\centering
\includegraphics[width=0.3\linewidth]{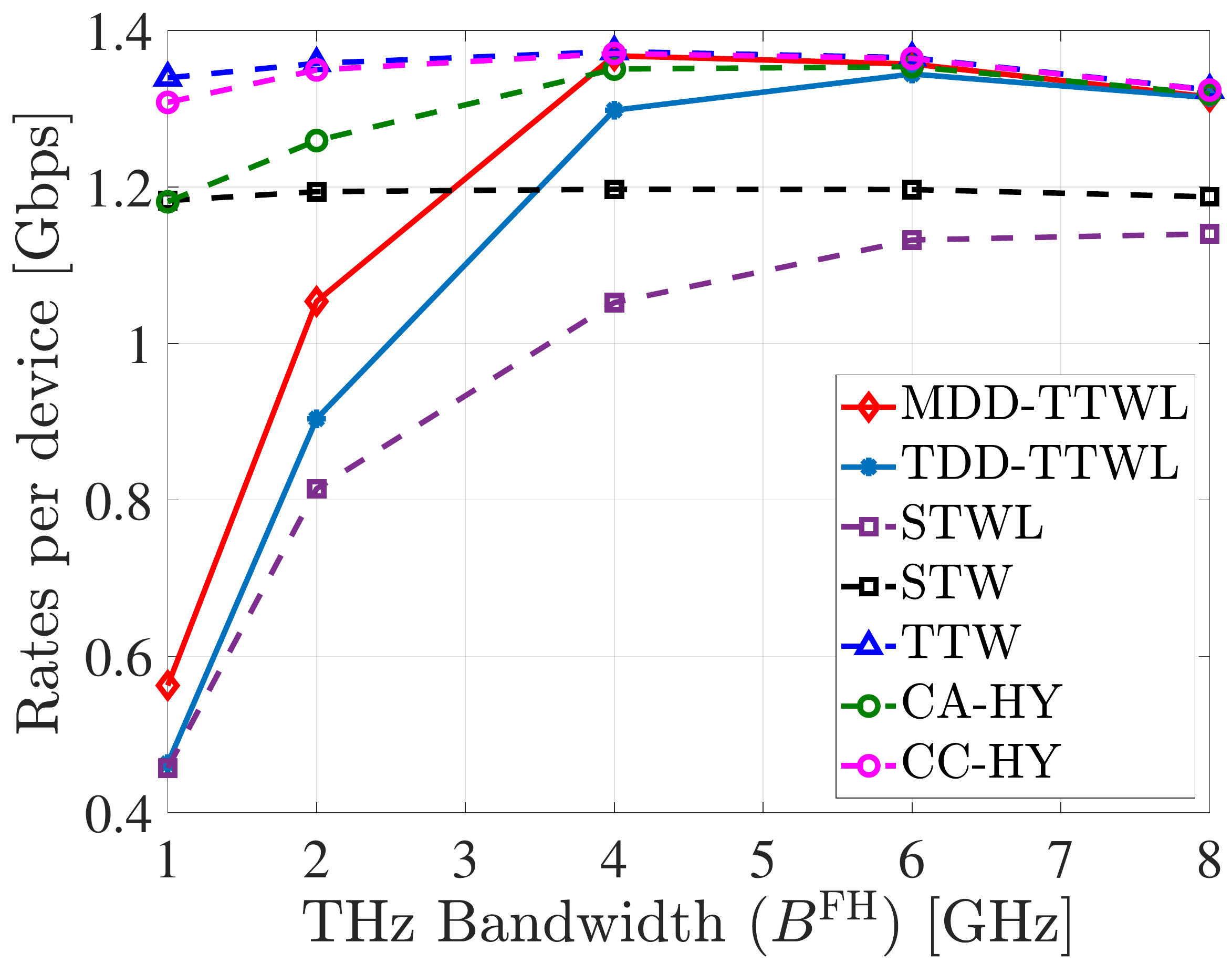}
\vspace{-0.5cm}
\caption{Performance comparison among different fronthaul schemes versus the THz bandwidth.}
\label{figure-MDDTHz-MDDVs}\vspace{-0.8cm}
\end{figure}

From Fig. \ref{figure-MDDTHz-MDDVs}, as expected, since TTW scheme applies fiber-optic cables to both CPU-to-CAP and CAP-to-AP links, it achieves the highest rates per device. As the performance of wireless CAP-to-AP links are affected by not only the inter-AP interference within the cluster, but also the inter-cluster interference, and hence CA-HY lags behind CC-HY. Furthermore, MDD-TTWL outperforms TDD-TTWL owing to the parallel frame structure and free of the guard period, as we analyzed in Fig. \ref{figure-MDDTHz-MDDDTDD}. In particular, MDD-TTWL is able to attain the similar rates as TTW when $B^{\text{FH}}$ is increased to 4 GHz. The reason is that as the THz bandwidth increases, the overall system performance mainly depends on the AP-to-Device access links over sub-6 GHz band. In this case, it was shown that the sufficient THz bandwidth makes the proposed two-tier wireless fronthaul scheme feasible in indoor industrial CF systems. On the other hand, without the optimization of AP clustering, STW and STWL cannot achieve a comparable performance to other fronthaul schemes. More specifically, in the STWL scheme, since the CPU communicates with all the APs over the THz band, the AP far away from the CPU experiences the undesirable fronthaul connection leading to the low overall end-to-end rates.

\vspace{-0.3cm} 
\subsection{Ray-Tracing Based Simulations in Practical Scenario}

We further evaluate our proposed MDD-enabled two-tier THz fronthaul in a practical indoor industrial scenario, as shown in Fig. \ref{figure-MDDTHz-RTscen}(a)-(b). This scenario corresponds to the digital twin obtained from LiDAR scans in a BOSCH factory in Blaichach, Germany. Electromagnetic properties have been assigned to the different objects in the scenario to obtain realistic frequency dependent simulations. The ray-tracing simulations were performed with the software WinProp. The height of CPU is 6.9 m, while the heights of all the APs and devices are 3 m and 1.5 m, respectively. A rich multipath propagation environment can be observed in the isotropic power delay profiles (PDP) depicted in Fig. \ref{figure-MDDTHz-RTscen}(c) (for sake of simplicity only the results observed at the AP 8 receiver are displayed). The difference of bandwidth between access and THz channels is observed in the resolution in the delay domain. From the figure, it can be observed that subject to the feature of high frequency, the received power over the THz channels attenuate heavily, especially when two nodes stay away from each other and there is no LoS path between them.  

\begin{figure}[]\vspace{-0.5cm}
\centering
\subfigure[]{
\includegraphics[width=0.65\linewidth]{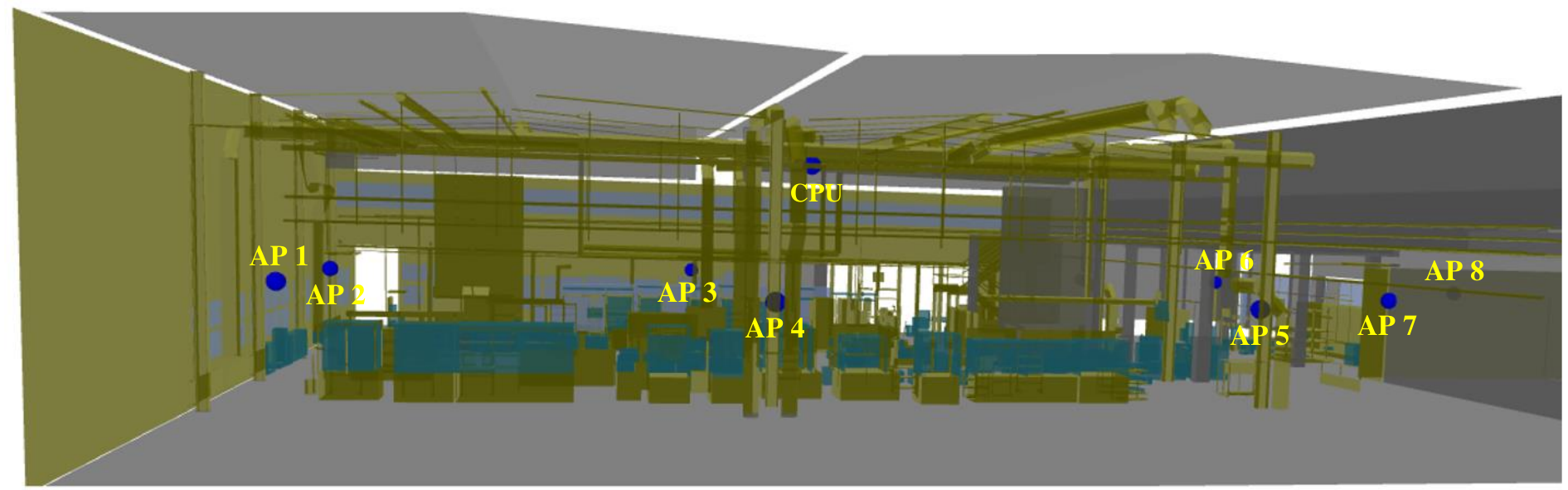}
}
\quad
\subfigure[]{
\includegraphics[width=0.32\linewidth]{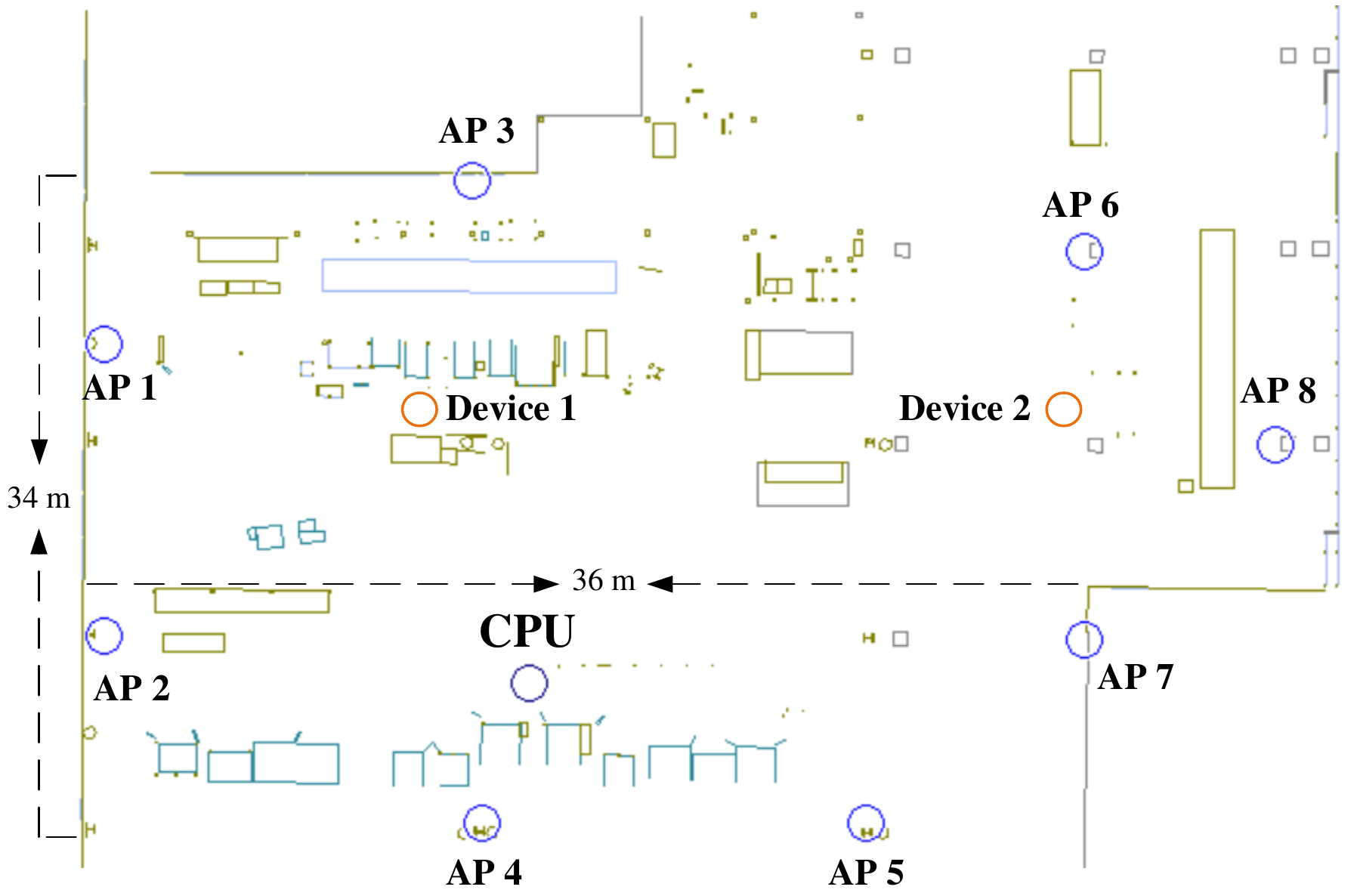}
}
\quad
\subfigure[]{
\includegraphics[width=0.25\linewidth]{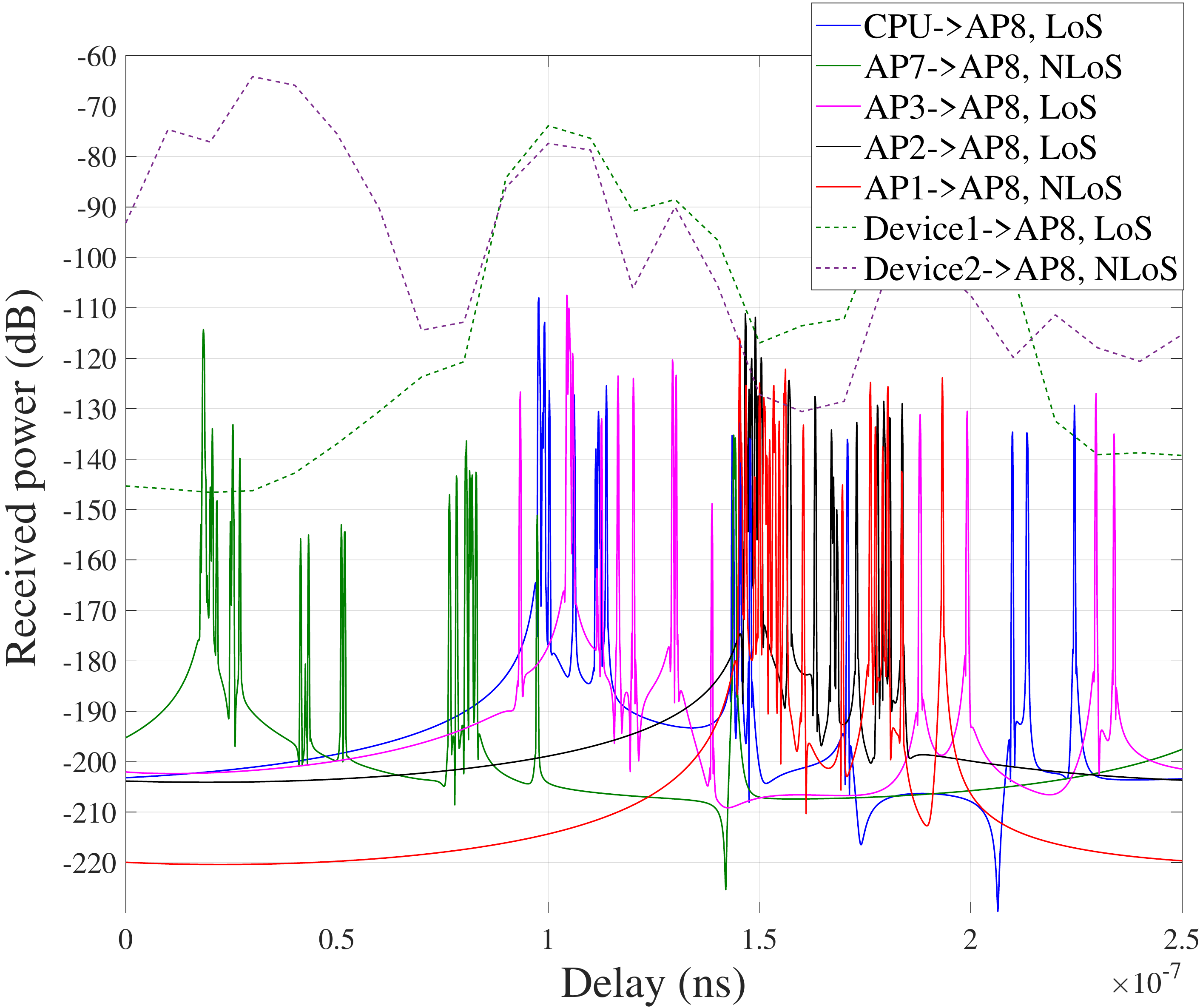}
}
\quad
\subfigure[]{
\includegraphics[width=0.29\linewidth]{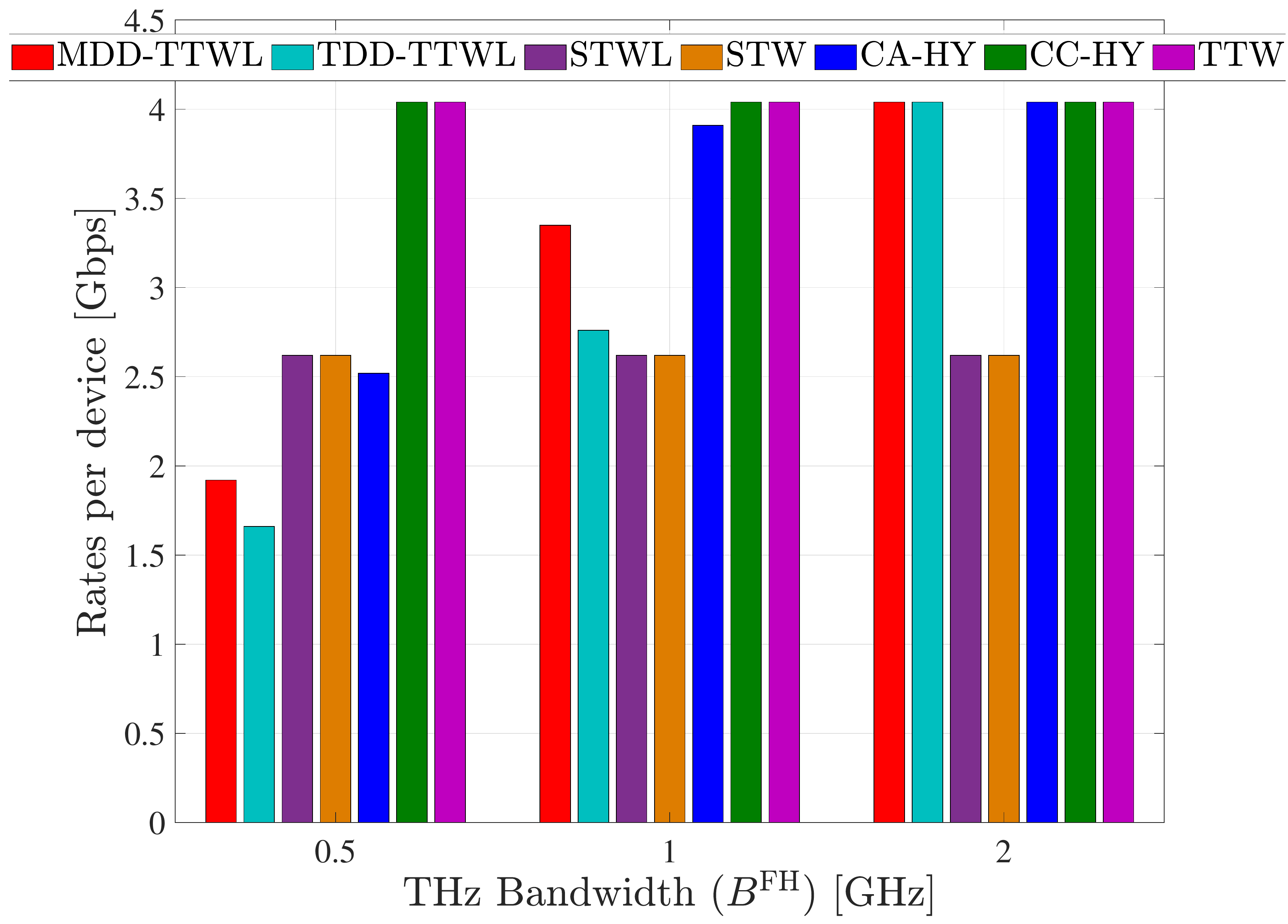}
}
\vspace{-0.5cm}
\caption{(a) A practical indoor industrial scenario, where the CPU is installed on the roof, while all the APs are installed on the wall or pillars. (b) The corresponding ray-tracing environment.
(c) Power delay profile measured at AP 8 receiver. (d) Performance comparison among different fronthaul schemes versus the THz bandwidth in practical indoor industrial scenario.}
\label{figure-MDDTHz-RTscen}\vspace{-0.8cm}
\end{figure}

Furthermore, we obtain the CIR of both access and THz channels based on ray-tracing channel measurements. Except the channel related parameters, all the other parameters (such as power budget, antenna gain and noise/SI power) are kept the same as in Section VI-A, and then a performance comparison among various fronthaul schemes in this practical scenario is presented in Fig. \ref{figure-MDDTHz-RTscen}(d). As expected, the results match with that in Fig. \ref{figure-MDDTHz-MDDVs}. Specifically, the MDD-TTWL outperforms TDD-TTWL when the THz bandwidth is less than 2 GHz, and CC-HY and TTW achieve the upper bound of achievable rates. Moreover, when the available bandwidth of THz-based fronthaul channels increases to 2 GHz, MDD-TTWL can attain the same rates as CC-HY and TTW. This observation implies that for the consideration of ease deployment and robust achievable rates, our proposed MDD-enabled two-tier THz scheme can be deemed as a promising approach to enabling fully-wireless fronthaul in practical indoor industrial scenarios.  

%\vspace{-0.7cm}
\section{Conclusion}
In this paper, we proposed an MDD-enabled two-tier THz fronthaul scheme, which can make the indoor industrial CF-mMIMO systems free from the costly wired fronthaul links.
 To maximize the achievable rates in such a complicated scheme, we firstly resorted to the low-complexity but efficient heuristic methods to relax binary variables. Then, the iterative optimization of the assignment of subcarrier sets and the number of AP clusters was applied to obtain the optimal overall end-to-end rates. Furthermore, an advanced MDD frame structure was tailored for the proposed scheme. Our studies and numerical results showed that compared with TDD, MDD is a potentially promising solution to enable two-tier THz fronthaul scheme. Moreover, with the aid of the sufficient THz bandwidth, our proposed scheme is capable of not only achieving the same rates with the existing fiber-optic based CF-mMIMO systems, but also imposing less burden on the system overhead. In order to further demonstrate the practicality of the proposed scheme, we applied it to a real indoor industrial scenario relying on ray-tracing based channel measurements, and the simulation results showed its superior over other methods.

\begin{comment}
\appendices

\section{Proof of Lemma 1}
\label{app:MDD-THz:L1}
The potentially optimal solution pairs of $\left(\mathcal{P}_{3}\right)$ can be listed as $(0,0)$, $(0,p_{l,u}[m])$, $(1,p_{l,u}[m])$ and $(1,0)$, where $p_{l,u}[m]\neq 0$. Hence, to prove Lemma 1, we need to demonstrate that the $(0,p_{l,u}[m])$ and $(1,0)$ can not be the optimal solutions. (i) For $\gamma_{l,u,m}=0$, it can be found that $C_{l,u}^{\text{CC}}(\gamma_{l,u,m}=0,p_{l,u}[m]\neq 0)< C_{l,u}^{\text{CC}}(\gamma_{l,u,m}=0,p_{l,u}[m]=0)$, as nonzero $p_{l,u}[m]$ has no contribution to the total sum rate but reduces the available transmit power; (ii) For $\gamma_{l,u,m}=1$, we have $C_{l,u}^{\text{CC}}(\gamma_{l,u,m}=1,p_{l,u}[m]=0)< C_{l,u}^{\text{CC}}(\gamma_{l,u,m}=1,p_{l,u}[m]\neq 0)$. The reason is that under the constraint (\ref{eq:MDD-THz:SE_sub3}c), the subcarrier diversity is reduced if zero power is allocated to an available subcarrier. In addition, when RZF is assumed, the non-zero $p_{l,u}[m]$ has negligible influence on the receiving of other desired signals. In conclusion, due to the special relationship between $\gamma_{l,u,m}$ and $p_{l,u}[m]$, the combination of these two variables can only be either $(0,0)$ or $(1,p_{l,u}[m])$, where $p_{l,u}[m]\neq 0$. This completes the proof.

\end{comment}
% Can use something like this to put references on a page
% by themselves when using endfloat and the captionsoff option.
\ifCLASSOPTIONcaptionsoff
  \newpage
\fi

% trigger a \newpage just before the given reference
% number - used to balance the columns on the last page
% adjust value as needed - may need to be readjusted if
% the document is modified later
%\IEEEtriggeratref{8}
% The "triggered" command can be changed if desired:
%\IEEEtriggercmd{\enlargethispage{-5in}}

% references section

% can use a bibliography generated by BibTeX as a .bbl file
% BibTeX documentation can be easily obtained at:
% http://mirror.ctan.org/biblio/bibtex/contrib/doc/
% The IEEEtran BibTeX style support page is at:
% http://www.michaelshell.org/tex/ieeetran/bibtex/
%\bibliographystyle{IEEEtran}
% argument is your BibTeX string definitions and bibliography database(s)
%\bibliography{IEEEabrv,../bib/paper}
%
% <OR> manually copy in the resultant .bbl file
% set second argument of \begin to the number of references
% (used to reserve space for the reference number labels box)

\bibliographystyle{IEEEtran}
\bibliography{MDD_bib_short}

% biography section
% 
% If you have an EPS/PDF photo (graphicx package needed) extra braces are
% needed around the contents of the optional argument to biography to prevent
% the LaTeX parser from getting confused when it sees the complicated
% \includegraphics command within an optional argument. (You could create
% your own custom macro containing the \includegraphics command to make things
% simpler here.)
%\begin{IEEEbiography}[{\includegraphics[width=1in,height=1.25in,clip,keepaspectratio]{mshell}}]{Michael Shell}
% or if you just want to reserve a space for a photo:

%\begin{IEEEbiography}{Michael Shell}
%Biography text here.
%\end{IEEEbiography}

% if you will not have a photo at all:
%\begin{IEEEbiographynophoto}{John Doe}
%Biography text here.
%\end{IEEEbiographynophoto}

% insert where needed to balance the two columns on the last page with
% biographies
%\newpage

%%\begin{IEEEbiographynophoto}{Jane Doe}
%Biography text here.
%\end{IEEEbiographynophoto}

% You can push biographies down or up by placing
% a \vfill before or after them. The appropriate
% use of \vfill depends on what kind of text is
% on the last page and whether or not the columns
% are being equalized.

%\vfill

% Can be used to pull up biographies so that the bottom of the last one
% is flush with the other column.
%\enlargethispage{-5in}

% that's all folks
\end{document}